\newtheorem{theorem}{Theorem}
\newtheorem{lemma}[theorem]{Lemma}
\newtheorem{proposition}[theorem]{Proposition}
\theoremstyle{definition}
\theoremstyle{remark}
\newtheorem{remark}[theorem]{Remark}
\numberwithin{equation}{section}
\numberwithin{theorem}{section}
\def\CC{\mathbb{C}}
\def\QQ{\mathbb{Q}}
\def\ZZ{\mathbb{Z}}
\DeclareMathOperator\End{End}
\DeclareMathOperator\Hom{Hom}
\DeclareMathOperator\Ind{Ind}
\def\vac{{\boldsymbol{1}}}  
\def\ii{\mathrm{i}} 
\def\al{\alpha}
\def\be{\beta}
\def\ga{\gamma}
\def\de{\delta}
\def\ep{\varepsilon}
\def\la{\lambda}
\def\si{\sigma}
\def\<{\left\langle}
\def\>{\right\rangle}
\def\lieh{{\mathfrak{h}}}
\def\F{\mathcal{F}}
\begin{document}
\title[
Quantum dimensions and fusion products ]{Quantum dimensions and fusion products for irreducible $V_Q^\si$-modules with $\si^2=1$}

\author{Jason Elsinger}
\address{Department of Mathematics\\
Spring Hill College\\
Mobile, AL 36695, USA}
\email{jelsinger@shc.edu}


\date{\today}

\subjclass[2016]{Primary 17B69; Secondary 81R10}

\keywords{Lattice vertex algebra; orbifold; twisted module; intertwining operator; quantum dimension; fusion product}

\begin{abstract}
Every isometry $\sigma$ of a positive-definite even lattice $Q$ can be lifted to an automorphism of 
the lattice vertex algebra $V_Q$. An important problem in vertex algebra theory and conformal field theory is to classify the
representations of the $\sigma$-invariant subalgebra $V_Q^\sigma$ of $V_Q$,
known as an orbifold. In the case when $\sigma$ is an isometry of $Q$ of order two, we have classified the irreducible modules of the
orbifold vertex algebra $V_Q^\sigma$ and identified them as submodules of twisted or untwisted $V_Q$-modules in \cite{BE}. Here we calculate their quantum dimensions and fusion products. 

The examples where $Q$ is the orthogonal direct sum of two copies of the $A_2$ root lattice and $\sigma$ is the 2-cycle permutation as well as where $Q$ is the $A_n$ root latice and $\si$ is a Dynkin diagram automorphism are presented in detail.
\end{abstract}

\maketitle


\section{Introduction}\label{sintro}

The notion of a vertex algebra 
has been a powerful tool for studying representations of infinite-dimensional Lie algebras.
The theory of vertex algebras was introduced by Borcherds \cite{B} and has since been developed in a number of works (see e.g. \cite{FLM, K2, FB, LL, KRR}).


Let $\si$ be an automorphism of a vertex algebra $V$. An {\it orbifold} is the subalgebra of $\si$-invariants in $V$, denoted $V^\si$ (see \cite{DVVV, KT, DLM2}). It will follow from definitions below that every $\si$-twisted representation of $V$ becomes untwisted when
restricted to $V^\si$. 

It has been a long-standing conjecture that all irreducible $V^\si$-modules are obtained by restriction from twisted or untwisted $V$-modules. This conjecture has recently been proved in a series of works by M. Miyamoto (see \cite{M1,M2,M3, MC}) under suitable assumptions. In particular, it is shown in \cite{MC} that
if $V$ is a simple regular vertex operator algebra of CFT type, then the fixed point subalgebra $V^\si$ for a finite automorphism $\si$ of $V$ is also regular.

In this paper, we are concerned with the case when $Q$ is a positive-definite even lattice and $\si$ is an isometry of $Q$ of order two. 
In \cite{BE}, we have classified and explicitly constructed all irreducible modules of the orbifold vertex algebra $V_Q^\si$, and have realized them as submodules of twisted or untwisted $V_Q$-modules. This paper is a continuation, where we give explicitly the quantum dimensions and fusion products for the irreducible $V_Q^\si$-modules.

Our general approach is to restrict from $V_Q^\si$ to $V_L^\si$, where $L$ is the sublattice of $Q$ spanned by eigenvectors of $\si$. The subalgebra $V_L^\si$ \, then factors as a tensor product $V_{L_+}\otimes V_{L_-}^\si$, where 
\[L_\pm=\{\al\in Q\,|\,\si\al=\pm\al\},\]
and $V_L^\si$ is rational by results of \cite{DLM3,ABD, A2, DJL, FHL}. 
We then use the known irreducible $V_L^\si$-modules (see \cite{DN,AD}) and their intertwining operators (see \cite{A1,ADL}) to determine the quantum dimensions and fusion products for the irreducible $V_Q^\si$-modules given in \cite{BE}.

The paper is organized as follows. In Section \ref{sva}, we briefly review lattice vertex algebras, their twisted modules, and the results for $\si=-1$ that we need. We also review the notions of quantum dimension and fusion product including results from \cite{DJX} that we need. In Section \ref{irrep}, we review the irreducible orbifold modules given in \cite{BE}.
Our main results are presented in Sections \ref{qd1} and \ref{fp1}. Examples where $Q$ is the orthogonal direct sum of two copies of the $A_2$ root lattice and $\sigma$ is the 2-cycle permutation as well as where $Q$ is the $A_n$ root latice and $\si$ is a Dynkin diagram automorphism are presented in detail
in Section \ref{ex}.

\section{Lattice Vertex algebras and their twisted modules}\label{sva}

In this section, we briefly review lattice vertex algebras and their twisted modules, and
recall the results in the case when $\si=-1$. These results will aid in the general description.
General references on vertex algebras are e.g. \cite{FLM, K2, FB, LL, KRR}, among many other works. 

\subsection{Vertex algebras and twisted modules}\label{vert}

A \emph{vertex algebra} 
is a vector space $V$ with a {\it vacuum vector} \,$\vac\in V$ 
together with a linear map 
\begin{equation}\label{vert2}
Y(\cdot,z)\cdot \colon V \otimes V \to  V[[z]][z^{-1}] \,
\end{equation}
satisfying the axioms listed below.
For $v\in V$, we have 
\begin{equation}
Y(v,z)=\displaystyle\sum_{n\in\ZZ}v_{(n)}z^{-n-1},\qquad v_{(n)}\in \End V.
\end{equation}
The vacuum vector is the identity in the sense that
\begin{equation*}
a_{(-1)}\vac = \vac_{(-1)} a = a \,, \qquad 
a_{(n)}\vac = 0 \,, \quad n\geq 0 \,.
\end{equation*}
In particular, 
\[
\displaystyle\lim_{z\to0}Y(a,z)\vac=a.
\]
For every $a\in V$, 
we call the image $Y(a,z)$ a {\it field}, meaning it can be viewed as
a formal power series from $(\End V)[[z,z^{-1}]]$
involving only finitely many negative powers of $z$ when
applied to any vector.

The main axiom for a vertex algebra is the \emph{Borcherds identity}
(also called Jacobi identity in \cite{FLM})
satisfied by the fields:
\begin{equation}\label{vert5}
\begin{split}
y^{-1}\de\left(\frac{z_1-z_2}{y}\right)&Y(u,z_1)Y(v,z_2)-y^{-1}\de\left(\frac{z_2-z_1}{-y}\right)Y(v,z_2)Y(u,z_1)\\
&=z_2^{-1}\de\left(\frac{z_1-y}{z_2}\right)Y(Y(u,y)v,z_2),
\end{split}
\end{equation}
where $u,v \in V$ and $\de$ is the delta function $\de(z)=\sum_{n\in\ZZ}z^n$. 
It is important to note that the commutator $[Y(u,z_1),Y(v,z_2)]$ follows from Borcherds identity be taking residue of both sides with respect to $y$.

A  $V$-\emph{module} for a vertex algebra $V$ is a vector space $M$ endowed with a
linear map $Y^M(\cdot,z)\cdot \colon V \otimes M \to M[[z]][z^{-1}]$
(cf.\ \eqref{vert2}) such that the Borcherds identity
\eqref{vert5} holds for all $u,v\in V$ which act as operators on $M$ (see \cite{FB, LL, KRR}).
Let $\si$ be an automorphism of $V$ of finite order $r$. Then $\si$ is diagonalizable.
In the definition of a \emph{$\si$-twisted \,$V$-module} $M$ \cite{FFR, D2}, the image of $Y^M$ is allowed to have nonintegral rational powers of $z$:
\begin{equation*}
Y(u,z) = \sum_{n\in p+\ZZ} u_{(n)} \, z^{-n-1} \,, \qquad
\text{if} \quad \si u = e^{-2\pi\ii p} u \,, \; p\in\frac1r\ZZ \,,
\end{equation*}
where $u_{(n)} \in \End M$.
The Borcherds identity in the twisted case is similar to \eqref{vert5}, and it requires that $u$ be an eigenvector of $\si$ (see \cite{FLM, FFR, D2}).

%

Recall from \cite{FHL} that if $V_1$ and $V_2$ are vertex algebras, their tensor product is again a vertex algebra via 
\begin{equation*}
Y(v_1\otimes v_2,z) = Y(v_1,z) \otimes Y(v_2,z) \,, \qquad v_i \in V_i \,.
\end{equation*}
Furthermore, if $M_i$ is a $V_i$-module, then the above formula defines the structure of a $(V_1\otimes V_2)$-module
on $M_1\otimes M_2$ (see \cite{FHL}). A similar statement is also true for twisted modules (see \cite[Lemma 2.2]{BE}). 


\subsection{Lattice vertex algebras}\label{lat}

Consider an \emph{integral lattice}, i.e., a free abelian group $Q$ of finite rank together with a symmetric
nondegenerate bilinear form $(\cdot|\cdot) \colon Q\times Q\to\ZZ$. We assume that $Q$ is \emph{even},
i.e., $|\al|^2=(\al|\al) \in2\ZZ$ for all $\al\in Q$.
The corresponding complex 
vector space
\,$\lieh = \CC\otimes_\ZZ Q$\, is  considered as an abelian Lie algebra with a bilinear form extended from \,$Q$.

The \emph{Heisenberg algebra}
$\hat\lieh = \lieh[t,t^{-1}] \oplus \CC K$
is the Lie algebra with brackets $[K,\lieh]=0$, and
\begin{equation}\label{heis1}
[a_m,b_n] = m \delta_{m,-n} (a|b) K \,, \qquad
a_m=at^m \,.
\end{equation}
Its induced irreducible highest-weight representation 
\begin{equation*}
M(1)= \Ind^{\hat\lieh}_{\lieh[t]\oplus\CC K} \CC \cong S(\lieh[t^{-1}]t^{-1})
\end{equation*}
on which $K=1$ is known as the (bosonic) \emph{Fock space}.

Following \cite{FK,B}, let $\ep\colon Q \times Q \to \{\pm1\}$ be a bimultiplicative $2$-cocycle
such that
\begin{equation}\label{lat2}
\ep(\al,\al) = (-1)^{|\al|^2/2}  \,,
\qquad \al\in Q \,.
\end{equation}
Then we have the associative algebra $\CC_\ep[Q]$ with basis
$\{ e^\al \}_{\al\in Q}$ and multiplication
\begin{equation}\label{lat1}
e^\al e^\be = \ep(\al,\be) e^{\al+\be} \,.
\end{equation}
Such a $2$-cocycle $\ep$ is unique up to equivalence.
In addition,
\begin{equation}\label{lat22}
\ep(\al,\be) \ep(\be,\al) = (-1)^{(\al|\be)}  \,, 
\qquad \al,\be\in Q \,.
\end{equation}

The \emph{lattice vertex algebra} 
associated to $Q$ is defined as $V_Q=M(1)\otimes\CC_\ep[Q]$,
where the vacuum vector is $\vac=1\otimes e^0$.
An action of the Heisenberg algebra on $V_Q$ is given by
\begin{equation*}
a_n e^\be = \delta_{n,0} (a|\be) e^\be \,, \quad n\geq0 \,, \qquad
a\in\lieh \,.
\end{equation*}
%
The map $Y$ on $V_Q$ is uniquely determined by the
generating fields:
\begin{align}\label{lat4}
Y(a_{-1}\vac,z) &= \sum_{n\in\ZZ} a_n \, z^{-n-1} \,, \qquad a\in\lieh \,,
\\ \label{lat5}
Y(e^\al,z) &= e^\al z^{\al_0} 
\exp\Bigl( \sum_{n<0} \al_n \frac{z^{-n}}{-n} \Bigr) 
\exp\Bigl( \sum_{n>0} \al_n \frac{z^{-n}}{-n} \Bigr) \,,
\end{align}
where $z^{\al_0} e^\be = z^{(\al|\be)} e^\be$.
Since the map $\lieh\to M(1)$ given by $a\mapsto a_{-1}\vac$ is injective,
we can identify
$a\in\lieh$ with $a_{-1}\vac \in M(1)$, i.e., $a_{(n)}=a_n$ for all $n\in\ZZ$. 

\subsection{Twisted representations of lattice vertex algebras}\label{twlat}
Suppose $\si$ is an automorphism of $\lieh$ with finite order $r$ which preserves the bilinear form $(\cdot|\cdot)$. Then $\si$ is naturally extended to both $\hat\lieh$ and $M(1)$, again denoted $\si$.
The \emph{$\si$-twisted Heisenberg algebra} 
$\hat\lieh_\si$ 
is spanned over $\CC$ by $K$ and
the elements $a_m = at^m$ such that $\si a = e^{-2\pi\ii m} a$. It becomes a Lie algebra with bracket 
\eqref{heis1} for $m,n\in \frac1r\ZZ$.

Let $\hat\lieh_\si^\ge$ (respectively, $\hat\lieh_\si^<$) be the abelian subalgebra of 
$\hat\lieh_\si$ spanned by all elements $a_m$ with $m\geq0$ 
(respectively, $m<0$). We let $\hat\lieh_\si^\ge$ act on $\CC$ trivially and $K$ act as the identity operator.
The \emph{$\si$-twisted Fock space} is defined as the induced module
\begin{equation}\label{twheis2}
M(1)_\si = \Ind^{\hat\lieh_\si}_{\hat\lieh_\si^\ge \oplus\CC K} \CC \cong S(\hat\lieh_\si^<) \,,
\end{equation}
and is an irreducible highest-weight representation of $\hat\lieh_\si$ which
has the structure of a $\si$-twisted representation of the vertex algebra $M(1)$
(see \cite{FLM,KRR}). 
For the map \,$Y$, we let $Y(\vac,z)$ be the identity operator,
\begin{equation*}
Y(a,z) = \sum_{n\in p+\ZZ} a_{n} \, z^{-n-1} \,, \qquad
a\in\lieh \,, \;\; \si a = e^{-2\pi\ii p} a \,, \;\;  p\in\frac1r\ZZ \,,
\end{equation*}
and extend linearly to all of $\lieh$.

Now consider $\si$ as an automorphism of $Q$. Since $\si$ preserves the bilinear form, the uniqueness of the cocycle $\ep$ and \eqref{lat22} imply that
\begin{equation}\label{twlat3}
\eta(\al+\be) \ep(\si\al,\si\be) = \eta(\al)\eta(\be) \ep(\al,\be)
\end{equation}
for some function $\eta\colon Q\to\{\pm1\}$. It is shown in \cite[Lemma 2.3]{BE} that we can set $\eta=1$ on any sublattice whose elements satisfy \/ $\ep(\si\al,\si\be) = \ep(\al,\be)$.
In particular, $\eta$ can be chosen such that
\begin{equation}\label{twlat2}
\eta(\al)=1 \,,\qquad \al\in Q\cap\lieh^\si \,,
\end{equation}
where $\lieh^\si\subset\lieh$ denotes the subspace spanned by vectors fixed under $\si$.
Then $\si$ can be lifted to an automorphism of 
the lattice vertex algebra $V_Q$ by setting
\begin{equation}\label{twlat4}
\si(a_n)=\si(a)_n \,, \quad \si(e^\al)=\eta(\al) e^{\si\al} \,,
\qquad a\in\lieh \,, \; \al\in Q \,.
\end{equation}
Note the order of $\si$ on $V_Q$ may double.

It is shown in \cite{BK} that irreducible $\si$-twisted $V_Q$-modules are parameterized by the set \,$(Q^*/Q)^\si$ of $\si$-invariants in $Q^*/Q$ and every $\si$-twisted $V_Q$-module is a direct sum of irreducible ones (see \cite[Theorem 4.2]{BK}). 
%
In the special case when $\si=1$, we get Dong's Theorem that the irreducible $V_Q$-modules
are classified by $Q^*/Q$ (see \cite{D1}). Explicitly, they are given by:
\begin{equation*}
V_{\la+Q} = \F\otimes\CC_\ep[Q] e^\la\,, \qquad \la\in Q^* \,.
\end{equation*}

When the lattice $Q$ is written as an orthogonal direct sum of sublattices, $Q=L_1\oplus L_2$, we have a natural isomorphism
$V_Q \cong V_{L_1} \otimes V_{L_2}$. 
It can be shown that if $L_1$ and $L_2$ are $\si$-invariant, there is
a correspondence of irreducible twisted modules in the sense that every irreducible\/ $\si$-twisted\/ $V_Q$-module\/ $M$ is a tensor product, $M\cong M_1\otimes M_2$, where\/
$M_i$ is an irreducible\/ $\si|_{L_i}$-twisted\/ $V_{L_i}$-module (cf.\ \cite{FHL, BE}).


\subsection{The case $\si=-1$}\label{tcasesi-1}

Now we review what is known in the case when $\si=-1$, which will be used in our treatment of the
general case. In this subsection, we denote the even integral lattice by $L$ instead of $Q$ so that 
$\lieh=\CC\otimes_\ZZ L$ is the corresponding complex vector space.

Let \,$\hat{L}$\, be the central extension of \,$L$\, by the 2-group \,$\langle-1\rangle$:
\begin{equation}\label{hatL}
1\to\langle-1\rangle\to\hat{L}\;\bar\to\; L\to1
\end{equation}
with the group structure in $\hat{L}$ written multiplicatively. 
Then $\epsilon$ is the associated 2-cocycle (cf. \eqref{lat1}). Let 
\begin{equation}\label{K}
K=\{\si(a)a^{-1}\,|\,a\in\hat{L}\}
\end{equation}
and \,$T$\, be any $\hat{L}/K$-module with the natural action of $-1$. Then these irreducible $\hat{L}/K$-modules are determined by central characters $\chi$ of \,$\hat{L}/K$\, such that
$\chi(-1)=-1$. When convenient, we denote the corresponding module by \,$T_\chi$.
Define the vector space $V_L^T=M(1)_\si\otimes T$ (cf.\ \eqref{twheis2}).
Then every irreducible $\si$-twisted $V_L$-module is isomorphic to $V_L^T$ for some irreducible $\hat{L}/K$-module $T_\chi$
(see \cite{FLM, D2}). 

An action of $\si$ on $V_L^T$ can be defined by
\begin{equation}\label{siact2}
\si\left(h^1_{(-n_1)}\cdots h^k_{(-n_k)}t\right)=(-1)^kh^1_{(-n_1)}\cdots h^k_{(-n_k)}t
\end{equation}
for $h^i\in\lieh$, $n_i\in\frac{1}{2}+\ZZ_{\geq 0}$ and $t\in T$. Then $\si$ is an automorphism of $V_L^\si$-modules. The eigenspaces for $\si$ are denoted $V_L^{T,\pm}$ and are both $V_L^\si$-modules. Similarly, an action of $\si$ on the untwisted $V_L$-modules $V_{\la+L}$ can be defined by
\begin{equation}\label{siact1}
\si\left(h^1_{(-n_1)}\cdots h^k_{(-n_k)}e^{\la+\al}\right)=(-1)^kh^1_{(-n_1)}\cdots h^k_{(-n_k)}e^{-\la-\al}
\end{equation}
for $h^i\in\lieh$, $n_i\in\ZZ_{\geq 0}$ and $\al\in L$.
Clearly $\si V_{\la+L} \subseteq V_{-\la+L}$ which implies the eigenspaces $V_{\la+L}^\pm$ are $V_L^\si$-modules for $\la\in L^*$ with $2\la\in L$ and $V_{\la+L} \cong V_{-\la+L}$ as $V_L^\si$-modules if $2\la\not\in L$. The following theorem is the classification of irreducible $V_L^\si$-modules.

\begin{theorem}[\cite{DN, AD}]\label{AD}
Let\/ $L$ be a positive-definite even lattice and\/ $\si=-1$ on\/ $L$.
Then any irreducible admissible\/ $V_L^\si$-module is isomorphic to one of the following$:$
\[
V_{\la+L}^\pm\quad (\la\in L^*, \,2\la\in L),\quad V_{\la+L}\quad (\la\in L^*, \,2\la\notin L),\quad V_L^{T,\pm},
\]
where\/ $T$ is an irreducible\/ $\hat{L}/K$-module.
\end{theorem}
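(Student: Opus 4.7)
The plan is to prove the classification in two stages: first construct the listed modules and show they are pairwise non-isomorphic irreducible $V_L^\si$-modules, and then show every irreducible admissible $V_L^\si$-module arises this way. The construction relies on the irreducible untwisted and $\si$-twisted $V_L$-modules recalled in Section \ref{twlat}: Dong's theorem gives the $V_{\la+L}$ with $\la\in L^*/L$, and the twisted analogue gives the $V_L^T$ for irreducible $\hat L/K$-modules $T$.

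For the construction step, I would analyze the action of $\si$ on each of these. Formula \eqref{siact1} sends $V_{\la+L}$ to $V_{-\la+L}$, so when $2\la\in L$ the two cosets coincide and $\si$ restricts to an order-two operator on $V_{\la+L}$, yielding the decomposition $V_{\la+L}=V_{\la+L}^+\oplus V_{\la+L}^-$ into $V_L^\si$-submodules; when $2\la\notin L$ the cosets are distinct, so $V_{\la+L}$ remains irreducible over $V_L^\si$ and is isomorphic to $V_{-\la+L}$ via $\si$. On the twisted side, \eqref{siact2} acts intrinsically on $V_L^T$ and gives $V_L^T=V_L^{T,+}\oplus V_L^{T,-}$.

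To show each summand is irreducible, I would use a Clifford-theoretic argument based on the $\ZZ/2$-grading of $V_L$ into $V_L^\si$ and the $(-1)$-eigenspace of $\si$. For any irreducible (twisted or untwisted) $V_L$-module $M$ on which $\si$ acts with eigenspaces $M^\pm$, a nonzero proper $V_L^\si$-submodule of $M^+$ would generate, under the action of the odd part of $V_L$, a proper $V_L$-submodule of $M$, contradicting irreducibility; the same argument applies to $M^-$. Pairwise nonisomorphism then follows by separating modules through their lowest conformal weights, their $L^*/L$-coset class, and the $\pm$ parity, the last detected by the action of $V_L^\si$-elements built from $\si$-invariant combinations $e^\al+\eta(\al)e^{-\al}$ with $\al\in L$.

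For the completeness step, I would invoke the orbifold results cited in Section \ref{sintro}: since $V_L$ is simple, regular, and of CFT type, $V_L^\si$ is rational and every irreducible $V_L^\si$-module embeds in some $g$-twisted $V_L$-module with $g\in\{1,\si\}$, that is, in some $V_{\la+L}$ or some $V_L^T$. Combining this embedding with the eigenspace decompositions of the previous step exhausts the list. The main obstacle I expect is the irreducibility and separation of the twisted eigenspaces $V_L^{T,\pm}$: the twisted Heisenberg algebra has half-integral modes and the action of $\CC_\ep[L]$ on $T$ is mediated through the finite group $\hat L/K$, so the Clifford-style argument requires careful bookkeeping of parities and of the central character $\chi$. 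In the proofs of \cite{DN,AD} this step is handled by a direct computation of the Zhu algebra $A(V_L^\si)$, which I regard as the technical heart of the theorem.
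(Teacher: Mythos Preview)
The paper does not prove this theorem; it is quoted in Section~\ref{tcasesi-1} as a background result from \cite{DN,AD}, so there is no in-paper argument to compare your proposal against.

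Your outline is nonetheless a coherent modern strategy, and in spirit it parallels what the paper itself does later for $V_Q^\si$: decompose twisted and untwisted $V_L$-modules into $\si$-eigenspaces, argue irreducibility by a Clifford-type parity argument, and appeal to the general orbifold theorems of \cite{M1,M2,M3,MC} for completeness. You also correctly flag that the original proofs in \cite{DN,AD} go instead through an explicit determination of the Zhu algebra $A(V_L^\si)$, which handles irreducibility, nonisomorphism, and exhaustiveness in one stroke without any appeal to the later orbifold machinery. One caution worth recording: your completeness step rests on the Miyamoto--Carnahan results, which postdate \cite{DN,AD} and in part were developed with cases like this one already in hand; so while the argument you sketch is logically sound given the present literature, it is not an independent substitute for the Zhu-algebra computation, and the ``technical heart'' you identify really is unavoidable if one wants a self-contained proof.
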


Now we describe intertwining operators between the irreducible $V_L^\si$-modules. 
For a vector space $U$, denote by
\begin{equation*}
U\{z\}=\Bigl\{\sum_{n\in\QQ}v_{(n)}z^{-n-1}\Big|v_{(n)}\in U\Bigr\}
\end{equation*}
the space of $U$-valued formal series involving rational powers of $z$.
Let $V$ be a vertex algebra, and $M_1, M_2, M_3$ be $V$-modules, which are not necessarily distinct. 
Recall from \cite{FHL} that an \emph{intertwining operator} of type
$\displaystyle\binom{M_3}{M_1\;M_2}$
is a linear map given by
\begin{align*}
\mathcal{Y}\colon M_1&\to\Hom(M_2,M_3)\{z\} \,,\\
v&\mapsto \mathcal{Y}(v,z)=\sum_{n\in\QQ}v_{(n)}z^{-n-1} \,,\quad v_{(n)}\in\Hom(M_2,M_3)
\end{align*}
which satisfies $v_{(n)}u=0$, for $n\gg 0$, and the Borcherds identity \eqref{vert5} for $u\in V, v\in M_1$, where the action of each term is on $M_2$.
The intertwining operators of type
$\displaystyle\binom{M_k}{M_i\;M_j}$
form a vector space denoted $\mathcal{V}_{i\,j}^{k}$ 
and the {\it fusion rule} associated with an algebra $V$ and its modules $M_i, M_j, M_k$ is 
$N_{i,\,j}^{k}=\dim\mathcal{V}_{i\,j}^{k}$. 

An {\it admissible} $\si$-twisted $V$-module $M$ is a $\frac{1}{r}\ZZ$-graded $V$-module 
\begin{equation}\label{admissible}
M=\bigoplus_{n\in\frac{1}{r}\ZZ}M(n)
\end{equation}
for which $(v_mM)(n)\subset M(\text{wt}v-m-1+n)$ for homogeneous $v\in V$ and $m,n\in\frac{1}{r}\ZZ$.
The {\it contragredient} $M'$ is defined as the graded dual space
\begin{align}
M'=\displaystyle\bigoplus_{n\in\frac{1}{r}\ZZ}M(n)^*,
\end{align}
where $M(n)^*=\Hom_\CC(M(n),\CC)$. A $V$-module $M$ is {\it self-dual} if $M=M'$. It is shown in \cite{FHL} that $M'$ has a vertex algebra module structure and the module $M$ is irreducible if and only if $M'$ is irreducible. It is also shown in \cite{FHL} that the fusion rules satisfy a few symmetry properties:
\begin{proposition}[\cite{FHL}]
For \,$V$-modules \,$W_1$, $W_2$\, and \,$W_3$, we have
\begin{equation}
N_{1, 2}^3=N_{2, 1}^3,\quad N_{1, 2}^3=N_{1, 3'}^{2'},
\end{equation}
where prime denotes contragredient module.
\end{proposition}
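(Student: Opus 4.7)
The plan is to exhibit explicit linear bijections between the spaces $\mathcal{V}_{i\,j}^{k}$ on each side of the asserted equalities, so the equality of dimensions is immediate.

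For the first identity $N_{1,2}^3=N_{2,1}^3$, I would use the skew-symmetry construction. Given $\mathcal{Y}\in\mathcal{V}_{1,2}^{3}$, define a candidate operator $\Omega(\mathcal{Y})\colon W_2\to\Hom(W_1,W_3)\{z\}$ by
\[
\Omega(\mathcal{Y})(w_2,z)w_1 = e^{zL(-1)}\,\mathcal{Y}(w_1,-z)\,w_2,
\]
where $L(-1)$ is the translation operator on $W_3$. The plan is to verify that $\Omega(\mathcal{Y})$ satisfies the intertwining form of the Borcherds identity \eqref{vert5}; the key inputs are the $L(-1)$-derivative property $[L(-1),Y(u,z)]=\partial_z Y(u,z)$ (which also holds for intertwining operators) together with the fact that conjugation by $e^{zL(-1)}$ is a formal translation. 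One then checks that $\Omega\circ\Omega=\id$ (up to the sign coming from $z\mapsto -z$), so $\Omega$ is a linear bijection between $\mathcal{V}_{1,2}^{3}$ and $\mathcal{V}_{2,1}^{3}$.

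For the second identity $N_{1,2}^3=N_{1,3'}^{2'}$, I would use the adjoint/contragredient construction. Recall from \cite{FHL} that the contragredient module structure $Y^{M'}$ on $M'$ is defined by the transformation
\[
\langle Y^{M'}(u,z)w',w\rangle = \langle w',Y\bigl(e^{zL(1)}(-z^{-2})^{L(0)}u,z^{-1}\bigr)w\rangle.
\]
Given $\mathcal{Y}\in\mathcal{V}_{1,2}^{3}$, I define $\mathcal{Y}'\colon W_1\to\Hom(W_{3'},W_{2'})\{z\}$ by
\[
\langle \mathcal{Y}'(w_1,z)w_3',\,w_2\rangle
= \langle w_3',\,\mathcal{Y}\bigl(e^{zL(1)}(-z^{-2})^{L(0)}w_1,\,z^{-1}\bigr)w_2\rangle.
\]
The strategy is to show that $\mathcal{Y}'$ is an intertwining operator of type $\binom{W_{2'}}{W_1\;W_{3'}}$, and then exhibit an inverse (by applying essentially the same construction again, noting $(W')'\cong W$) to obtain a linear isomorphism between $\mathcal{V}_{1,2}^{3}$ and $\mathcal{V}_{1,3'}^{2'}$.

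The main obstacle in both cases is the verification of the Borcherds/Jacobi identity for the newly defined operator; this is purely formal but delicate. For $\Omega$ it reduces to moving $e^{zL(-1)}$ through the generating series using the derivation property and reindexing the delta-function identities under $z\mapsto -z$. For the adjoint, the substitution $z\mapsto z^{-1}$ combined with the prefactor $e^{zL(1)}(-z^{-2})^{L(0)}$ is precisely the one needed to make the three delta-function terms in \eqref{vert5} transform into one another correctly, and the compatibility hinges on the conformal weight grading and the $\mathfrak{sl}_2$-commutation relations among $L(-1),L(0),L(1)$. Once the intertwining identities and the involutivity (up to canonical isomorphism) are verified, the two claimed equalities of fusion rules follow at once.
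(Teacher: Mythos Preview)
The paper does not give its own proof of this proposition; it is simply quoted from \cite{FHL}. Your outline is exactly the argument carried out there: the skew-symmetry map $\Omega$ establishes the first identity and the adjoint (contragredient) construction the second, with bijectivity coming from the (near-)involutivity of each map.

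One technical caveat worth tightening in your write-up: since intertwining operators here involve non-integral powers of $z$, the substitution ``$z\mapsto -z$'' in the definition of $\Omega$ is ambiguous and must be read as $z\mapsto e^{\pm\pi\ii}z$ (a choice of branch), which is how \cite{FHL} formulates it; consequently $\Omega$ is an involution only up to a scalar phase depending on conformal weights. This does not affect the induced bijection between $\mathcal{V}_{1,2}^{3}$ and $\mathcal{V}_{2,1}^{3}$, but the phrasing ``$\Omega\circ\Omega=\id$'' should be adjusted accordingly.
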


The fusion rules for $V_L^\si$ and its irreducible modules
were calculated in \cite{A1, ADL} to be either zero or one. 
In order to present their theorem, we first introduce some additional notation.
For $\la\in L^*$ such that $2\la\in L$, let
\begin{align}
\label{pi}
\pi_{\la,\mu} 
&=(-1)^{|\la|^2|\mu|^2} \,, \qquad \la,\mu\in L^* \,,\\
\label{c}
c_\chi(\la)&=(-1)^{(\la|2\la)}\ep(\la,2\la)\chi(e^{2\la}) \,.
\end{align}
For any central character $\chi$ of $\hat{L}/K$, also set $\chi^{(\la)}$ and $\chi'$ to be the central characters defined by 
\begin{align}\label{newchi}
\chi^{(\la)}(a)&=(-1)^{(\la|\bar{a})}\chi(a) \,,\\
\chi'(a)&=(-1)^{\frac{1}{2}(\bar{a}|\bar{a})}\chi(a),\label{chiprime}
\end{align}
for \,$a$ in the center of $\hat{L}/K$, and set $T^{(\la)}_\chi=T_{\chi^{(\la)}}$. Note that \,$\chi'=\chi$\, whenever $4$ divides \,$(\bar{a}|\bar{a})$\, for all $a\in \hat{L}$. It is shown in \cite{ADL} that 
\begin{align}
(V_{\la+L})'&\cong V_{\la+L},\quad 2\la\notin L\\
(V_{\la+L}^\pm)'&\cong
\begin{cases}
V_{\la+L}^\pm,&2|\la|^2\in2\ZZ\\
V_{\la+L}^\mp,&2|\la|^2\in2\ZZ+1
\end{cases},\quad 2\la\in\ L\label{V'}\\
(V_L^{T_\chi,\pm})'&\cong V_L^{T_{\chi'},\pm}.
\end{align}

The following theorem is only a part of Theorem 5.1 from \cite{ADL}.

\begin{theorem}[\cite{ADL}]\label{ADL}
Let\/ $L$ be a positive-definite even lattice, and \/ $\epsilon\in\{\pm\}$. 
Then for two irreducible\/ $V_L^\si$-modules\/ $M_2,M_3$, we have the following fusion rules:
\begin{enumerate}
\item If \/ $\la\in L^*\cap\frac{1}{2}L$, the fusion rule of type\/ $\displaystyle\binom{M_3}{V_{\la+L}^{\epsilon}\;M_2}$
is equal to\/ $1$ if and only if the pair\/ $(M_2,M_3)$ is one of the following$:$
\begin{align*}
&(V_{\mu+L},V_{\la+\mu+L}), \quad \mu\in L^*, \;\; 2\mu\not\in L,\\
&(V_{\mu+L}^{\epsilon_1},V_{\la+\mu+L}^{\epsilon_2}),\quad \mu\in L^*, \;\; 2\mu\in L, \;\;\epsilon_1\in\{\pm\}, \;\;
\epsilon_2=\epsilon_1\epsilon\pi_{\la,2\mu}, \\
&(V_L^{T_\chi, \,\epsilon_1}, V_L^{T^{(\la)}_{\chi}, \,\epsilon_2}),\quad \epsilon_1\in\{\pm\}, \;\; \epsilon_2=c_\chi(\la)\epsilon_1\epsilon.
\end{align*}
\item The fusion rule of type\/ $\displaystyle\binom{M_3}{V_{L}^{T_\chi,\epsilon}\;M_2}$
is equal to\/ $1$ if and only if the pair\/ $(M_2,M_3)$ is one of the following$:$
\begin{align*}
&((V_{L}^{T_\chi^{(\la)},\pm})',V_{\la+L}), \quad (V_{\la+L}, V_{L}^{T_\chi^{(\la)},\pm}),\quad  \la\notin L^*\cap\frac{1}{2}L,\\
&((V_{L}^{T_\chi^{(\la)},\epsilon_1})',(V_{\la+L}^{\epsilon_2})'),\quad (V_{\la+L}^{\epsilon_2}, V_{L}^{T_\chi^{(\la)},\epsilon_1}), \quad  \la\in L^*\cap\frac{1}{2}L,
\end{align*}
\end{enumerate}
where \,$\epsilon_1\in\{\pm\},$  and\,
$\epsilon_2=c_\chi(\la)\epsilon_1\epsilon.$
In all other cases, the fusion rules of types\/ $\displaystyle\binom{M_3}{V_{\la+L}^{\epsilon}\;M_2}$ and $\displaystyle\binom{M_3}{V_{L}^{T_\chi,\epsilon}\;M_2}$ are zero.
\end{theorem}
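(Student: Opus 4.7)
The plan is to reduce the classification to the structure of intertwining operators for the full lattice vertex algebra $V_L$ (and its $\si$-twisted modules), then impose $\si$-equivariance. Each irreducible $V_L^\si$-module listed in \thref{AD} is realized as a $\si$-eigenspace inside an irreducible $V_L$-module or $\si$-twisted $V_L$-module. A $V_L^\si$-intertwining operator $\mathcal{Y}$ of type $\binom{M_3}{M_1\,M_2}$ can therefore be assembled from its lift to the ambient $V_L$-modules, for which the fusion space is one-dimensional by the classical lattice calculation. This simultaneously yields the vanishing statements and the upper bound $N_{1,2}^{3}\le 1$ in the remaining cases.

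For part (1), take $M_1=V_{\la+L}^\epsilon$ with $\la\in L^*\cap\frac{1}{2}L$ and set $M_2=V_{\mu+L}^{\epsilon_1}$ with $2\mu\in L$. I would extend $\mathcal{Y}$ to an intertwining operator of type $\binom{V_{\la+\mu+L}}{V_{\la+L}\,V_{\mu+L}}$ and identify it, up to scalar, with the restriction of $Y(e^\la,z)$. The condition for $\mathcal{Y}$ to descend to the $\si$-eigenspaces in all three slots is that conjugation by $\si$ multiplies $Y(e^\la,z)$ by $\epsilon\,\epsilon_1\,\epsilon_2$. Using \eqref{siact1} and the explicit vertex operator formula \eqref{lat5}, this compatibility reduces, via the bimultiplicativity of the cocycle and \eqref{lat22}, to the sign relation $\epsilon_2=\epsilon_1\epsilon\pi_{\la,2\mu}$. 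For $M_2=V_L^{T_\chi,\epsilon_1}$ one uses twisted vertex operators instead; the sign $c_\chi(\la)$ then arises from the action of $e^{2\la}$ through the central character $\chi$ combined with the cocycle sign $\ep(\la,2\la)$, matching the definition \eqref{c}.

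Part (2) follows from part (1) via the contragredient symmetry $N^{3}_{1,2}=N^{2'}_{1,3'}$ stated in the proposition above. Under this symmetry the roles of $V_L^{T_\chi,\epsilon}$ and the untwisted $V_{\la+L}$ (or $V_{\la+L}^{\epsilon_2}$) are interchanged, with $\chi$ replaced by $\chi'$ from \eqref{chiprime} when $2\la\in L$, and the eigenspace sign flipped according to \eqref{V'}. The same cocycle analysis then yields the asserted sign condition $\epsilon_2=c_\chi(\la)\epsilon_1\epsilon$. Nonvanishing on the listed pairs is established by exhibiting an explicit intertwining operator: built from $Y(e^\la,z)$ on the untwisted side, and from the twisted vertex operator construction of \cite{FLM,D2} on the twisted-to-untwisted side.

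The main technical obstacle is the bookkeeping of signs. The factors $\pi_{\la,2\mu}$, $c_\chi(\la)$, and the shifted characters $\chi^{(\la)}$ and $\chi'$ are subtle outputs of the interplay among the lift of $\si$ via \eqref{twlat4}, the cocycle relation \eqref{twlat3}, and the central extension $\hat{L}$ from \eqref{hatL}. Tracking how these signs propagate through the Borcherds identity applied to mixed untwisted/twisted products is where the real work lies; once that accounting is done, the rest of the argument is a direct extension of the classical classification of fusion rules for $V_L$-modules.
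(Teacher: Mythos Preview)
The paper does not prove this theorem. It is quoted verbatim (as ``only a part of Theorem 5.1 from \cite{ADL}'') and used as a black box throughout Sections \ref{qd1}--\ref{ex}; no argument for it appears anywhere in the paper. So there is nothing here to compare your proposal against.

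For what it is worth, your outline is in the spirit of the original proof in \cite{ADL}: lift $V_L^\si$-intertwining operators to (twisted) $V_L$-intertwining operators, use the known one-dimensionality there, and then track the $\si$-eigenvalues to pin down the signs. But be aware that your sketch glosses over the genuinely hard step. The upper bound $N_{1,2}^3\le 1$ does \emph{not} follow automatically from the one-dimensionality of the ambient $V_L$ fusion space, because an arbitrary $V_L^\si$-intertwining operator between $\si$-eigenspaces has no a priori reason to extend to a $V_L$-intertwining operator on the full modules; establishing that extension (or bounding the space by other means, e.g.\ via Zhu bimodules or the $A(V_L^\si)$-theory) is the substance of \cite{A1,ADL} and occupies most of those papers. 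Your proposal assumes this step rather than proving it.
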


\subsection{Quantum dimension and fusion product}\label{qdfp}
The notions and properties of quantum dimensions have been systematically studied in \cite{DJX}. A vertex algebra is {\it rational} if the admissible module category is semisimple (cf. \eqref{admissible}). Let $C_2(V)\subset V$ denote the subalgebra spanned by all $-2$-products. A vertex algebra is {\it $C_2$-cofinite} if \,$V/C_2(V)$\, is finite dimensional.
Suppose $V=\bigoplus_{n\in\ZZ}V_n$ is a self dual vertex operator algebra of {\it CFT type}, i.e., $V_n=0$ for $n<0$, and $V_0=\CC\vac$. 
In the case when $V$ is also rational and $C_2$-cofinite, 
quantum dimensions of its irreducible modules have nice properties and turn out to be helpful in determining their fusion products.

For a vertex algebra $V$ and a $\frac{1}{r}\ZZ$-graded $\si$-twisted $V$-module $M$, the {\it formal character} of $M$ is defined as
\begin{equation}
\text{ch}_qM=q^{\la-c/24}\sum_{n\in \frac{1}{r}\ZZ}(\dim M_{\la+n})q^n,
\end{equation}
where $\la$ is the conformal weight of $M$. The {\it quantum dimension} of $M$ over $V$ is defined as
\begin{equation}
q\dim_VM=\lim_{q\to1^-}\frac{\text{ch}_qM}{\text{ch}_qV}.
\end{equation}

Next we describe the fusion product of two irreducible $V$-modules, \,$W_1$\, and \,$W_2$. Let $I\in\mathcal{V}_{W_1\,W_2}^{W}$ be an intertwining operator. A module \,$(W,I)$ is a {\it fusion product} of \,$W_1$\, and \,$W_2$\, if for any $V$-module \,$M$\, and\, $\mathcal{Y}\in\mathcal{V}_{W_1\,W_2}^{M}$, there is a unique $V$-module homomorphism\, $\phi:W\longrightarrow M$ \,such that\, $\mathcal{Y}=\phi\,\circ I$. The fusion product $(W,I)$ is typically denoted by \,$W_1\boxtimes W_2$ and in the case when $V$ is rational, it is known that the fusion product exists:
\begin{equation}
W_1\boxtimes W_2=\sum_WN_{W_1,\,W_2}^{W}\,W,
\end{equation}
where the sum runs over all equivalence classes of irreducible $V$-modules. 
If \,$V$\, is a simple vertex algebra, a simple \,$V$-module \,$M$\, is a {\it simple current} if for any irreducible \,$V$-module \,$W$, the fusion product \,$W\boxtimes M$\, exists and is also a simple \,$V$-module. 

Now assume $V$ is a rational, $C_2$-cofinite, self dual vertex algebra of CFT type with finitely many irreducible admissible modules. Following \cite{DJX}, set $M_0,\ldots,M_d$ as the inequivalent irreducible $V$-modules with $M_0\cong V$. Under these conditions, the following properties of quantum dimension are shown in \cite{DJX}.
\begin{proposition}[\cite{DJX}]\label{DJX}
\begin{enumerate}
\item $q\dim_V M_i\geq1$ for any \,$0\leq i\leq d$.
\item If $q\dim_V M$ exists, then $q\dim_VM=q\dim_VM'.$
\item For any $0\leq i,j\leq d$,
\begin{equation*}\label{qdbox}
q\dim_V(M_i\boxtimes M_j)=q\dim_V M_i\cdot q\dim_V M_j.
\end{equation*}
\item $M$ is a simple current of $V$ if and only if $q\dim_VM=1$.
\end{enumerate}
\end{proposition}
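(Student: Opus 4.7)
The plan is to reduce all four statements to the modular transformation properties of characters of irreducible $V$-modules. Under the hypotheses (rational, $C_2$-cofinite, self-dual of CFT type), Zhu's theorem guarantees that the span of the normalized characters $\text{ch}_q M_i$ (with $q=e^{2\pi\ii\tau}$) carries a representation of $SL_2(\ZZ)$, and in particular the $S$-transformation $\tau\mapsto -1/\tau$ permutes them linearly via a matrix $S=(S_{ij})$. The first step I would carry out is to identify
\[
q\dim_V M_i = S_{0,i}/S_{0,0},
\]
obtained by applying the $S$-transformation to both numerator and denominator of the defining ratio and extracting the leading asymptotic as $\tau\to \ii 0^+$ (equivalently $q\to 1^-$); the dominant contribution comes from the vacuum character on the transformed side once the other modules are shown to be subleading using positivity of their conformal weights.

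With this in hand, I would prove (2) from $S^2=C$, the charge-conjugation matrix sending $i\mapsto i'$, so that $S_{0,i'}=S_{0,i}$ and therefore $q\dim_V M_i = q\dim_V M_i'$. Part (3) is then a direct calculation using the Verlinde formula $N_{ij}^k=\sum_l S_{il}S_{jl}\overline{S_{kl}}/S_{0l}$: one writes $q\dim_V(M_i\boxtimes M_j)=\sum_k N_{ij}^k\,S_{0k}/S_{00}$, and substituting the Verlinde expression together with unitarity of $S$ collapses the sum to $(S_{0i}/S_{00})(S_{0j}/S_{00})$. For (1), I would then combine (2) and (3): since $V$ is a summand of $M_i\boxtimes M_i'$ with multiplicity one (as $N_{i,i'}^0=1$ from the general identity for contragredient duals), we have $q\dim_V M_i\cdot q\dim_V M_i'\geq 1$, which together with $q\dim_V M_i = q\dim_V M_i'$ and positivity of quantum dimensions yields $q\dim_V M_i\geq 1$.

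For (4), if $M$ is a simple current then $M\boxtimes M'\cong V$, so by (2) and (3) we obtain $(q\dim_V M)^2=1$, and combined with (1) this forces $q\dim_V M=1$. Conversely, if $q\dim_V M=1$, then $q\dim_V(M\boxtimes M')=1$; but $V$ is always a summand of $M\boxtimes M'$ with multiplicity one, and $q\dim_V V=1$, so $M\boxtimes M'=V$. This shows $M$ is invertible in the fusion ring, so for any simple $W$, decomposing $M\boxtimes W=\bigoplus_k N_{M,W}^k M_k$ and fusing with $M'$ gives $W=\bigoplus_k N_{M,W}^k (M'\boxtimes M_k)$; simplicity of $W$ then forces exactly one summand to occur and with multiplicity one, whence $M\boxtimes W$ is simple.

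The step I expect to be hardest is the initial identification $q\dim_V M_i = S_{0,i}/S_{0,0}$: one must rigorously control the subleading exponential contributions from all non-vacuum modules in the $S$-transformed expansion of the characters, and verify that the vacuum module achieves the minimal conformal weight on the transformed side so that its contribution dominates. This is where the CFT type, rationality, and $C_2$-cofiniteness hypotheses enter most essentially, through Zhu's modular invariance theorem together with the asymptotic analysis of the holomorphic modular forms involved; once this limit--interchange is justified, the remaining arguments above are essentially formal manipulations in the fusion ring.
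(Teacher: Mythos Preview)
The paper does not prove this proposition at all: it is quoted verbatim from \cite{DJX} and used as a black box, so there is no ``paper's own proof'' to compare against. Your outline is essentially the argument given in \cite{DJX} itself---the identification $q\dim_V M_i = S_{0,i}/S_{0,0}$ via Zhu's modular invariance and asymptotics, followed by Verlinde-formula manipulations---so your approach is the standard one and is correct in broad strokes.

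One small comment on your sketch: in your argument for (1) you invoke ``positivity of quantum dimensions'' before it has been established. The actual proof in \cite{DJX} obtains positivity of $S_{0,i}/S_{0,0}$ directly from the asymptotic analysis (the characters have nonnegative integer coefficients, hence the limit defining the quantum dimension is a limit of positive reals), and then combines this with $N_{i,i'}^{0}=1$ and parts (2)--(3) exactly as you describe. Also, the assertion that the vacuum contribution dominates on the $S$-transformed side requires the additional standing hypothesis in \cite{DJX} that the conformal weights of all nonvacuum irreducibles are strictly positive; this is not a consequence of CFT type alone, so you should state it as an explicit assumption rather than hoping it falls out of the other hypotheses.
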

We also list a few other basic properties of quantum dimension needed in the calculations.

\begin{lemma}
Let $A$, $A_i$, and $B$ be $V$-modules for which $A\subset B$ and $i$ is in some index set. Then 
\begin{align}
q\dim_VA&\leq q\dim_VB,\\
q\dim_V\bigoplus_i A_i&=\sum_iq\dim_VA_i.\label{sum}
\end{align}
\end{lemma}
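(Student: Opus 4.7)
The plan is to derive both assertions from elementary properties of graded characters and then pass to the limit $q\to 1^-$, invoking the facts already in place (positivity of $\text{ch}_q V$ and existence of the quantum dimensions) to justify the passage.

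For the monotonicity statement, I would argue as follows. Since $A$ is a $V$-submodule of $B$, it inherits the grading, so each homogeneous component satisfies $A_n \subseteq B_n$, and consequently $\dim A_n \leq \dim B_n$ for every $n$. Coefficient-by-coefficient this gives
\[
\text{ch}_q A \;=\; \sum_n (\dim A_n)\,q^{n-c/24} \;\leq\; \sum_n (\dim B_n)\,q^{n-c/24} \;=\; \text{ch}_q B
\]
for all $q\in(0,1)$. Dividing through by the positive quantity $\text{ch}_q V$ preserves the inequality, and passing to the limit $q\to 1^-$ yields $q\dim_V A \leq q\dim_V B$.

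For the additivity under direct sums, the graded decomposition of $\bigoplus_i A_i$ gives at once
\[
\text{ch}_q \Bigl(\bigoplus_i A_i\Bigr) \;=\; \sum_i \text{ch}_q A_i
\]
(as formal series, and then as functions on $(0,1)$). Dividing by $\text{ch}_q V$ and letting $q\to 1^-$ produces $q\dim_V\bigl(\bigoplus_i A_i\bigr) = \sum_i q\dim_V A_i$, where interchange of the limit and the sum is trivial when the index set is finite. For an infinite index set, each summand $\text{ch}_q A_i / \text{ch}_q V$ has nonnegative coefficients in $q$, so one passes to the limit by monotone convergence, or equivalently by applying the inequality from the first part to every finite partial sum $\bigoplus_{i\in F} A_i \subset \bigoplus_i A_i$ to bound it from above.

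The main obstacle is not at the combinatorial level but in confirming the analytic step: namely, that limits of the ratios $\text{ch}_q M / \text{ch}_q V$ exist and commute with (possibly infinite) sums. In the rational, $C_2$-cofinite, CFT-type setting that underlies all of our applications, the quantum dimensions are known to be finite and the characters extend to well-behaved functions near $q=1$, so these justifications are standard and the argument reduces to the coefficientwise comparison sketched above.
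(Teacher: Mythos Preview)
Your argument is correct and is exactly the elementary verification one would give; the paper in fact states this lemma without proof, treating both properties as immediate consequences of the definition of the formal character and of $q\dim$. There is nothing to compare against, and your handling of the analytic step (existence of the limit, interchange with sums) is appropriate for the rational, $C_2$-cofinite setting used throughout.
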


\begin{lemma}
For vertex algebras \,$V$\, and \,$W$, let $M$ be a $V$-module and $N$ be a $W$-module. Then 
\begin{align}
q\dim_{V\otimes W}M\otimes N&=q\dim_VM\cdot q\dim_WN\label{tensor}.
\end{align}
\end{lemma}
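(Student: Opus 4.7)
The plan is to reduce the claim to the fact that characters are multiplicative under the tensor product of vertex algebras and their modules, and then argue that the limit defining the quantum dimension of a product is the product of the limits.

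First I would unpack the grading on the tensor product module. By the construction recalled from \cite{FHL}, the Virasoro vector of $V\otimes W$ is the sum of the Virasoro vectors of the factors, so the central charge is $c_{V\otimes W}=c_V+c_W$, and the $L(0)$-grading on $M\otimes N$ is induced by $L(0)\otimes 1+1\otimes L(0)$. Consequently the conformal weight of $M\otimes N$ is $\lambda_M+\lambda_N$ and the homogeneous components decompose as
\begin{equation*}
(M\otimes N)_{\lambda_M+\lambda_N+n}=\bigoplus_{i+j=n}M_{\lambda_M+i}\otimes N_{\lambda_N+j}.
\end{equation*}
Taking dimensions and folding the shifts $q^{\lambda_M+\lambda_N-(c_V+c_W)/24}$ into the two factors separately yields
\begin{equation*}
\ch_q(M\otimes N)=\ch_q M\cdot \ch_q N,\qquad \ch_q(V\otimes W)=\ch_q V\cdot \ch_q W.
\end{equation*}

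Next I would divide and pass to the limit. Formally,
\begin{equation*}
\frac{\ch_q(M\otimes N)}{\ch_q(V\otimes W)}=\frac{\ch_q M}{\ch_q V}\cdot\frac{\ch_q N}{\ch_q W},
\end{equation*}
and taking $q\to 1^{-}$ on both sides gives the desired equality, provided the two one-variable limits on the right exist. The main (and essentially only) subtlety is exactly this interchange: one needs $q\dim_V M$ and $q\dim_W N$ to exist as finite real numbers, which is part of the tacit standing hypothesis under which quantum dimensions are discussed in Proposition~\ref{DJX}. Under this hypothesis the product of two convergent real limits is the limit of the product, and the identity follows.

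The step I expect to require the most care is the convergence issue, i.e.\ justifying that the limit of a quotient of $q$-series behaves well under multiplication. In the rational, $C_2$-cofinite, CFT-type setting the character ratios $\ch_q M/\ch_q V$ and $\ch_q N/\ch_q W$ are known to converge as $q\to 1^{-}$ by the results of \cite{DJX} used throughout the paper, so no further analytic work is needed; one only has to remark that both factor vertex algebras inherit these hypotheses, or equivalently that one is applying the lemma in a context where both factor quantum dimensions are already known to exist. With that remark in place the argument is just the two displayed identities above.
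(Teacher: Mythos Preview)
Your argument is correct and is the natural one: characters multiply under tensor product because the Virasoro element of $V\otimes W$ is the sum of the factor Virasoro elements, so the $L(0)$-grading and central charge are additive, and then the ratio of characters factors, after which one only needs existence of the two factor limits to conclude.

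The paper itself does not supply a proof of this lemma; it is stated as one of the ``basic properties of quantum dimension needed in the calculations'' and left to the reader. So there is nothing to compare against beyond noting that your proof is exactly the expected direct computation from the definitions. Your remark about the convergence hypothesis is appropriate: the lemma is only applied in the paper in settings where the factor quantum dimensions are already known to exist (lattice VOAs and their $\sigma=-1$ orbifolds, which are rational and $C_2$-cofinite), so the interchange of limit and product is justified in every use.
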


\section{Irreducible $V_Q^\si$-modules}\label{irrep}

%
From now on, we assume $\si$ has order 2. The map $\si$ is extended by linearity to the complex vector space $\lieh=\CC\otimes_\ZZ Q$ and we denote by 
\begin{equation}\label{qbar1}
\pi_\pm=\frac{1}{2}(1\pm\si) \,, \qquad \al_\pm = \pi_\pm(\al)
\end{equation}
the projections onto the eigenspaces of $\si$.
Introduce the important sublattices
\begin{equation}\label{Lpm}
L_\pm=\lieh_\pm\cap Q,\, \qquad L=L_+\oplus L_-\subseteq Q\,,
\end{equation}
where $\lieh_\pm=\pi_\pm(\lieh)$.
Note that $\lieh=\lieh_+\oplus\lieh_-$ is an orthogonal direct sum.

In \cite{BE} we constructed and classified the irreducible $V_Q^\si$-modules and as a consequence of our result \cite[Theorem 3.7]{BE}, we realized all of them as submodules of twisted or untwisted $V_Q$-modules  \cite[Theorem 3.8]{BE}. 
For a positive-definite even lattice $Q$ and an order two automorphism $\sigma$ of $Q$, the first step in our construction is to restrict $Q$ to the sublattice $\bar{Q}$ satisfying $(\al|\sigma\al)\in2\ZZ$ for all $\al\in Q$. Under this condition, $\si^2=1$ on the lattice vertex algebra $V_{\bar{Q}}$, and $(V_Q)^{\si^2}=V_{\bar{Q}}$. Therefore
\begin{equation}\label{qbar2}
V_Q^\si = \bigl((V_Q)^{\si^2} \bigr)^\si = V_{\bar Q}^\si \,
\end{equation}
so that we may assume $|\si|=2$ on $V_Q$ and only work with the sublattice $\bar{Q}$. For simplicity, we use $Q$ instead of $\bar{Q}$ for the rest of this work. It is also shown in \cite{BE} that \,$V_Q^\si$ decomposes as a direct sum of irreducible \,$V_L^\si$-modules:
\begin{eqnarray}\label{orb}
V_Q^\si\cong\bigoplus_{\ga+L\in Q/L} V_{\ga_++L_+}\otimes V_{\ga_-+L_-}^{\eta(\ga)}\,.
\end{eqnarray}

\begin{theorem}[\cite{BE}]\label{result1}
Let\/ $Q$ be a positive-definite even lattice, and\/ $\sigma$ be an automorphism of\/ $Q$ of order two
such that\/ $(\al|\si\al)$ is even for all\/ $\al\in Q$. Then as a module over\/ $V_L^\si\cong V_{L_+}\otimes V_{L_-}^+$
each irreducible\/ $V_Q^\sigma$-module is isomorphic to one of the following$:$
\begin{align}
\label{AD1}
&\bigoplus_{\ga+L\in Q/L}V_{\ga_++\la+L_+}\otimes V_{\ga_-+\mu+L_-}
\qquad (2\mu\not\in L_-)\,, \\
\label{AD2}
&\bigoplus_{\ga+L\in Q/L}V_{\ga_++\la+L_+}\otimes V_{\ga_-+\mu+L_-}^{\epsilon\eta(\ga)}
\qquad (2\mu\in L_-)\,, 
\end{align}
where\/ $\la\in L_+^*$, $\mu\in L_-^*$, $\epsilon\in\{\pm\}$,  
\begin{align}
&\bigoplus_{\ga+L\in Q/L}V_{\ga_++\la+L_+}\otimes V_{L_-}^{T_\chi^{(\ga_-)},\epsilon_\ga}\qquad\epsilon_\ga=\epsilon\eta(\ga)c_\chi(\ga_-)
\,,\label{AD3}
\end{align}
where $\la\in (\pi_+Q)^*$, and $\chi$ is a central character of the group $\hat{L_-}/K$ associated to $L_-$ \upshape{(}c.f. \eqref{hatL}, \eqref{K}\upshape{)}.
Furthermore, every irreducible\/ $V_Q^\sigma$-module is a submodule of a\/ $V_Q$-module or a\/ $\si$-twisted\/ $V_Q$-module.
\end{theorem}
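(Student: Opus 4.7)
The plan is to reduce the classification over $V_Q^\si$ to the known classification over the smaller subalgebra $V_L^\si\cong V_{L_+}\otimes V_{L_-}^+$, where the irreducible modules are enumerated by Dong's theorem on the $V_{L_+}$ factor combined with \thref{AD} on the $V_{L_-}^+$ factor. The key structural input is the decomposition \eqref{orb}, which exhibits $V_Q^\si$ as a $(Q/L)$-graded extension of $V_L^\si$ whose graded pieces are irreducible $V_L^\si$-modules.

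First I would take an irreducible $V_Q^\si$-module $M$ and pick any irreducible $V_L^\si$-submodule $N\subseteq M$. By the $V_L^\si$-classification, $N$ has the form $V_{\la+L_+}\otimes X$, where $\la\in L_+^*$ and $X$ is one of the three types of irreducible $V_{L_-}^+$-modules listed in \thref{AD}. Next I would generate $M$ from $N$ by acting on $N$ by the graded pieces $V_{\ga_++L_+}\otimes V_{\ga_-+L_-}^{\eta(\ga)}$ of $V_Q^\si$ as $\ga+L$ runs over $Q/L$. Under the tensor factorization, the $L_+$-component simply shifts $\la\mapsto \la+\ga_+$, while the $L_-$-component acts on $X$ as an intertwining operator; the image is exactly the one dictated by the fusion rules of \thref{ADL}. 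Assembling these images over all cosets produces the direct sums \eqref{AD1}, \eqref{AD2}, or \eqref{AD3}, according to whether $X$ is of the form $V_{\mu+L_-}$ with $2\mu\notin L_-$, $V_{\mu+L_-}^{\epsilon}$ with $2\mu\in L_-$, or $V_{L_-}^{T_\chi,\epsilon}$.

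To complete the proof I would verify (i) that the resulting direct sum is stable under the full $V_Q^\si$-action, which reduces to associativity of intertwining operators together with the cocycle identity \eqref{twlat3}, and (ii) that it is irreducible, which follows because the $(Q/L)$-grading acts transitively on the $V_L^\si$-isotypic components and each component appears with multiplicity one. For the ``furthermore'' assertion I would identify \eqref{AD1}--\eqref{AD2} as the $V_L^\si$-decomposition of an untwisted $V_Q$-module $V_{\la+\mu+Q}$ (passing to $\pm$-eigenspaces when $2\mu\in L_-$), and \eqref{AD3} as the analogous decomposition of an irreducible $\si$-twisted $V_Q$-module determined by $\chi$ and $\la$; Miyamoto's regularity theorem, cited in \seref{sintro}, guarantees that every irreducible $V_Q^\si$-module arises this way.

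The main obstacle is the bookkeeping of signs: the factor $\eta(\ga)$ on the $L_-$-eigenspace, the shift of the central character $\chi\mapsto\chi^{(\ga_-)}$ in the twisted case, and the compound constants $\pi_{\la,2\mu}$ and $c_\chi(\ga_-)$ produced by iterated intertwining operators. Controlling these simultaneously as $\ga$ ranges over $Q/L$ is precisely what pins down the particular direct sums displayed in \eqref{AD1}--\eqref{AD3} and rules out any other combinations of $V_L^\si$-summands.
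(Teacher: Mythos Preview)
Your outline is essentially the strategy of \cite{BE} as summarized in this paper: restrict an irreducible $V_Q^\si$-module to $V_L^\si\cong V_{L_+}\otimes V_{L_-}^+$, pick an irreducible summand, and then propagate it along the $(Q/L)$-graded pieces of \eqref{orb} using the fusion rules of \thref{ADL}. In that sense your proposal is correct and follows the same route.

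Two remarks, however. First, the present paper does not actually reprove this theorem; it simply cites \cite{BE} and appends \reref{condition}. The content of that remark is precisely the one point you do not address: the integrality constraint on $\la$. Acting by $v_\ga=e^{\ga_+}\otimes(e^{\ga_-}+\eta(\ga)e^{-\ga_-})$ on $e^\la\otimes V_{L_-}^{T_\chi,\epsilon}$ produces a factor $z^{(\ga_+|\la)}$, and for the resulting module to be untwisted one needs $(\ga_+|\la)\in\ZZ$ for every $\ga\in Q$, i.e.\ $\la\in(\pi_+Q)^*$ rather than merely $\la\in L_+^*$. The analogous computation forces $\la+\mu\in Q^*$ in \eqref{AD1}--\eqref{AD2}. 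This is the ``only difference'' the paper flags relative to the statement in \cite{BE}, and your sketch of ``bookkeeping of signs'' does not cover it.

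Second, your appeal to Miyamoto's theorem for the ``furthermore'' clause is superfluous. Once the first part of your argument shows that an \emph{arbitrary} irreducible $V_Q^\si$-module is isomorphic to one of \eqref{AD1}--\eqref{AD3}, it remains only to exhibit each of these explicitly as a $V_L^\si$-submodule of some $V_{\nu+Q}$ or of some $\si$-twisted $V_Q$-module; that is a direct identification (as you indicate) and requires no abstract regularity statement. Invoking Miyamoto there is harmless but circular in spirit, since the classification itself already yields completeness.
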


\begin{remark}\label{condition}
The proof is given in \cite{BE}. The only difference is the condition on $\la$ in \eqref{AD3}. Note that $(\pi_+Q)^*/\pi_+Q\subseteq L_+^*/L_+$ and set \begin{equation}\label{vgamma}
v_\ga=e^\ga+\eta(\ga)e^{\si\ga}=e^{\ga_+}\otimes (e^{\ga_-}+\eta(\ga)e^{-\ga_-}) \in V_{\ga_++L_+}\otimes V_{\ga_-+L_-}^{\eta(\ga)} \,.
\end{equation} 
Then by the fusion rules in \cite{ADL} we have 
\begin{align}
Y(v_\ga,z):e^\la\otimes V_{L_-}^{T_{\chi},\ep}\longrightarrow Y(e^{\ga_+},z)e^{\la}\otimes V_{L_-}^{T_{\chi}^{(\ga_-)},\ep_\ga},
\end{align}
where $ Y(e^{\ga_+},z)e^{\la}=z^{(\ga_+|\la)}\exp\Bigl( \sum_{n<0} \al_n \frac{z^{-n}}{-n} \Bigr)e^{\ga_++\la}$. To be an untwisted module implies that $(\ga_+|\la)\in\ZZ$ for all $\ga\in Q$, i.e., $\la\in(\pi_+Q)^*$.
Using a similar argument, we also have the restriction that $\la+\mu\in Q^*$ in \eqref{AD1} and \eqref{AD2}.
\end{remark}

In order to present the fusion products of irreducible $V_Q^\si$-modules, it is necessary to have more convenient notation.
We will denote the modules \eqref{AD1} by $U_{\la,\mu}$ and refer to them as modules of type 1. Similarly the modules \eqref{AD2} are denoted $U_{\la,\mu}^\pm$ and we refer to them as modules of type 2. Lastly, the modules \eqref{AD3} are denoted $T_{\la,\chi}^{\epsilon_0}$, where the sign $\epsilon_0$ is the eigenvalue of $\sigma$ on the term with $\ga=0$, and we refer to them as modules of twisted type.

\section{Quantum Dimension of Irreducible $V_Q^\si$-modules}\label{qd1}

The first important step is to reduce the quantum dimension over $V_Q^\si$ to the quantum dimension over the subalgebra $V_L^+$.
\begin{proposition}\label{qd}
We have
\begin{align}
q\dim_{V_L^+}V_Q^\si&=\left|Q/L\right|,\label{qdVL}
\end{align}
and the quantum dimension of any irreducible $V_Q^\si$-module $W$ can be determined from the quantum dimension of $W$ as a $V_L^+$-module. In particular,
\begin{align}
q\dim_{V_Q^\si}W&=\left|Q/L\right|^{-1}q\dim_{V_L^+}W.\label{qdW}
\end{align}
\end{proposition}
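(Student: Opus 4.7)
The plan is to reduce everything to the decomposition \eqref{orb} of $V_Q^\si$ as a direct sum of irreducible $V_L^\si$-modules, exploit that each summand is a tensor product of simple currents, and then convert quantum dimensions over $V_L^+$ to quantum dimensions over $V_Q^\si$ via a general restriction identity.

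Concretely, starting from \eqref{orb} together with the additivity \eqref{sum} and the tensor product formula \eqref{tensor}, I would first write
\begin{equation*}
q\dim_{V_L^+} V_Q^\si \;=\; \sum_{\ga+L\in Q/L} q\dim_{V_{L_+}} V_{\ga_++L_+}\,\cdot\, q\dim_{V_{L_-}^+} V_{\ga_-+L_-}^{\eta(\ga)}.
\end{equation*}
Since $2\ga_- = \ga-\si\ga\in L_-$ for every $\ga\in Q$, the second factor is a legitimate type 2 module of $V_{L_-}^+$. Each $V_{\ga_++L_+}$ is a simple current of the lattice vertex algebra $V_{L_+}$ (its fusion with any irreducible $V_{L_+}$-module is a single coset representative), and by \thref{ADL} the fusion of $V_{\ga_-+L_-}^{\eta(\ga)}$ with any irreducible $V_{L_-}^+$-module is $0$ or $1$, with exactly one nonzero output, so $V_{\ga_-+L_-}^{\eta(\ga)}$ is also a simple current. \prref{DJX}(4) then forces each summand to equal $1$, and summing over the $|Q/L|$ cosets yields \eqref{qdVL}.

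For \eqref{qdW} I would invoke the general identity
\begin{equation*}
q\dim_{V'} W \;=\; \frac{q\dim_V W}{q\dim_V V'}
\end{equation*}
valid for any VOA extension $V\subseteq V'$ and any $V'$-module $W$, which is an immediate consequence of the limit definition $q\dim_V W=\lim_{q\to 1^-}\ch_q W/\ch_q V$ together with the algebraic factorization $\ch_q W/\ch_q V'=(\ch_q W/\ch_q V)/(\ch_q V'/\ch_q V)$; the graded dimension of $W$ does not depend on whether it is regarded as a module over $V$ or over $V'$. Taking $V=V_L^+$ and $V'=V_Q^\si$ and substituting \eqref{qdVL} into the denominator then gives \eqref{qdW}.

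The principal obstacle I anticipate is extracting the simple-current property of the type 2 factor $V_{\ga_-+L_-}^{\eta(\ga)}$ from \thref{ADL}: one must verify that with the precise sign $\eta(\ga)$ the fusion rules produce exactly one irreducible output among the irreducible $V_{L_-}^+$-modules listed in \thref{AD}, using \eqref{c} and the constants $\pi_{\la,\mu}$ in \eqref{pi}. All remaining steps reduce to formal manipulation of characters and the standard properties of quantum dimension recalled in \seref{qdfp}.
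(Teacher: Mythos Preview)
Your proposal is correct and matches the paper's proof essentially line for line: both argue that each summand of \eqref{orb} is a simple current (via the fusion rules of $V_{L_+}$ and $V_{L_-}^+$) and then divide characters to convert quantum dimensions over $V_L^+$ to those over $V_Q^\si$. Your anticipated obstacle is not one: \thref{ADL}(1) shows $V_{\ga_-+L_-}^\epsilon$ is a simple current for \emph{either} sign $\epsilon\in\{\pm\}$, so the precise value of $\eta(\ga)$ is irrelevant to the argument.
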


\begin{proof}
From the fusion rules of irreducible modules of $V_{L_+}$ and $V_{L_-}^+$, we have that $V_{\ga_++L_+}\otimes V_{\ga_-+L_-}^{\eta(\ga)}$ is a simple current for every \,$\ga\in Q$. The result \eqref{qdVL} then follows since $V_Q^\si$ has precisely $\left|Q/L\right|$ summands (cf. \eqref{orb}).
Now let $W$ be an irreducible $V_Q^\si$-module. Then 
\begin{align*}
q\dim_{V_Q^\si}W
&=\displaystyle\frac{q\dim_{V_L^+}W}{q\dim_{V_L^+}V_Q^\si}
=\left|Q/L\right|^{-1}q\dim_{V_L^+}W.
\end{align*}
\end{proof}
The quantum dimensions for untwisted type irreducible $V_Q^\si$-modules can now be determined using \eqref{qdW} and turn out to be independent of the lattice $Q$. 
\begin{theorem}\label{qdtype12}
Let \,$W$ be an irreducible $V_Q^\si$-module of untwisted type. Then we have the following quantum dimensions:
\begin{align}
q\dim_{V_Q^\si}W&=2,\hspace{8pt} \text{if $W$ is of type 1},\\
q\dim_{V_Q^\si}W&=1, \hspace{8pt}\text{if $W$ is of type 2.}\label{qdtype2}
\end{align}
In particular, the irreducible \,$V_Q^\si$-modules of type 2 are simple currents.
\end{theorem}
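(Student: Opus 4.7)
The plan is to apply Proposition \ref{qd} to reduce the computation of $q\dim_{V_Q^\si}W$ to that of $q\dim_{V_L^+}W$, and then to exploit the factorization $V_L^\si\cong V_{L_+}\otimes V_{L_-}^+$ together with the multiplicativity formula \eqref{tensor} and additivity \eqref{sum}. Each summand of the explicit decomposition \eqref{AD1} or \eqref{AD2} of $W$ is a tensor product of an irreducible $V_{L_+}$-module with an irreducible $V_{L_-}^+$-module, so the problem reduces to quoting known quantum dimensions of these smaller pieces and then summing over the $|Q/L|$ cosets.

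The required inputs, both consequences of \cite{DJX}, are as follows: every irreducible $V_{L_+}$-module $V_{\ga_++\la+L_+}$ is a simple current and hence has $q\dim_{V_{L_+}}=1$; for $V_{L_-}^+$, the irreducible modules $V_{\nu+L_-}^{\pm}$ satisfy $q\dim_{V_{L_-}^+}=1$ whenever $2\nu\in L_-$ (they are simple currents obtained by splitting $V_{\nu+L_-}$), while the irreducible modules $V_{\nu+L_-}$ satisfy $q\dim_{V_{L_-}^+}=2$ whenever $2\nu\notin L_-$. I would also briefly verify that the condition $2\mu\in L_-$ (resp.\ $2\mu\notin L_-$) propagates to $2(\ga_-+\mu)\in L_-$ (resp.\ $\notin L_-$) for every coset representative $\ga$, which holds because $2\ga_-=\ga-\si\ga\in L_-$.

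With these ingredients assembled, the calculation is immediate. For a type 1 module $U_{\la,\mu}$ each of the $|Q/L|$ summands contributes $1\cdot 2=2$, so $q\dim_{V_L^+}U_{\la,\mu}=2|Q/L|$; dividing by $|Q/L|$ via \eqref{qdW} yields $q\dim_{V_Q^\si}U_{\la,\mu}=2$. For a type 2 module $U_{\la,\mu}^\pm$ each summand contributes $1\cdot 1=1$, giving $q\dim_{V_L^+}U_{\la,\mu}^\pm=|Q/L|$ and hence $q\dim_{V_Q^\si}U_{\la,\mu}^\pm=1$. The simple current assertion then follows immediately from Proposition \ref{DJX}(4).

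The main hurdle is less a conceptual one than a matter of bookkeeping: one must confirm that each tensor-product summand is an irreducible $V_L^+$-module and that the classification in Theorem \ref{AD} correctly places it in the case $2\nu\in L_-$ or $2\nu\notin L_-$, so that the appropriate $V_{L_-}^+$ quantum dimension is applied. Once this matching is in place, no further computation is needed.
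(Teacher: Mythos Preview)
Your proposal is correct and follows essentially the same route as the paper: reduce to $V_L^+$ via Proposition~\ref{qd}, factor each summand through $V_{L_+}\otimes V_{L_-}^+$, and invoke the simple-current facts for the tensor factors. The only cosmetic difference is that the paper derives $q\dim_{V_{L_-}^+}V_{\ga_-+\mu+L_-}=2$ (for $2(\ga_-+\mu)\notin L_-$) via the ratio identity $q\dim_{V_{L_-}^+}(\,\cdot\,)=q\dim_{V_{L_-}}(\,\cdot\,)\cdot q\dim_{V_{L_-}^+}V_{L_-}$ rather than quoting it outright, and your explicit check that $2\ga_-=\ga-\si\ga\in L_-$ propagates the type is a detail the paper leaves implicit.
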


\begin{proof}
Let $W=\bigoplus_{\ga+L\in Q/L}V_{\ga_++\la+L_+}\otimes V_{\ga_-+\mu+L_-}$ be a $V_Q^\si$-module of type 1 (cf. \eqref{AD1}). Using \eqref{sum} and \eqref{tensor}, we have
\begin{align*}
q\dim_{V_{L}^+}W&=q\dim_{V_L^+}\bigoplus_{\ga+L\in Q/L}V_{\ga_++\la+L_+}\otimes V_{\ga_-+\mu+L_-}\\
&=\sum_{\ga+L\in Q/L}q\dim_{V_{L_+}\otimes V_{L_-}^+}V_{\ga_++\la+L_+}\otimes V_{\ga_-+\mu+L_-}\\
&=\sum_{\ga+L\in Q/L}q\dim_{V_{L_+}}V_{\ga_++\la+L_+}\cdot q\dim_{V_{L_-}^+}V_{\ga_-+\mu+L_-}.
\end{align*}
By the fusion rules of irreducible modules of $V_{L_+}$, the simple module \,$V_{\ga_++\la+L_+}$ is a simple current for $V_{L_+}$ so that \begin{align}
q\dim_{V_{L_+}}V_{\ga_++\la+L_+}=1.\label{qdtype1}
\end{align}
Similarly, $q\dim_{V_{L_-}}V_{\ga_-+\mu+L_-}=1$.
By the fusion rules of irreducible modules of $V_{L_-}^+$, both simple modules $V_{L_-}^\pm$ are simple currents for $V_{L_-}^+$ so that 
\begin{align}
q\dim_{V_{L_-}^+}V_{L_-}&=q\dim_{V_{L_-}^+}V_{L_-}^++q\dim_{V_{L_-}^+}V_{L_-}^-=2.
\end{align}
Since $V_{L_-}$ is a $V_{L_-}^+$-module, we can write
\begin{align}
q\dim_{V_{L_-}}V_{\ga_-+\mu+L_-}&=\frac{q\dim_{V_{L_-}^+}V_{\ga_-+\mu+L_-}}{q\dim_{V_{L_-}^+}V_{L_-}},
\end{align}
and therefore 
\begin{align}
q\dim_{V_{L_-}^+}V_{\ga_-+\mu+L_-}&=q\dim_{V_{L_-}}V_{\ga_-+\mu+L_-}\cdot q\dim_{V_{L_-}^+}V_{L_-}=2.
\end{align}
Hence $q\dim_{V_{L}^+}W=2|Q/L|$ and $q\dim_{V_Q^\si}W=2$ using \eqref{qdW}.

Now let $W=\bigoplus_{\ga+L\in Q/L}V_{\ga_++\la+L_+}\otimes V_{\ga_-+\mu+L_-}^{\epsilon_\ga}$ be a $V_Q^\si$-module of type 2, where each $\epsilon_\ga\in\{\pm\}$ (cf. \eqref{AD2}). By fusion rules of irreducible modules of $V_{L_-}^+$, both simple modules $V_{\ga_-+\mu+L_-}^\pm$ are simple currents. Then similarly by \eqref{sum}, \eqref{tensor}, 
\begin{align*}
q\dim_{V_{L}^+}W&=\sum_{\ga+L\in Q/L}q\dim_{V_{L_+}}V_{\ga_++\la+L_+}\cdot q\dim_{V_{L_-}^+}V_{\ga_-+\mu+L_-}^{\epsilon_\ga}\\
&=\sum_{\ga+L\in Q/L}1=|Q/L|.
\end{align*}
Hence $q\dim_{V_Q^\si}W=1$ by \eqref{qdW}.
\end{proof}

Quantum dimensions for irreducible $V_Q^\si$-modules of twisted type can also be realized in terms of quantum dimension of twisted type $V_{L_-}^+$-modules (cf. Proposition \ref{qd}). In order to write this relationship explicitly, we require a closer look at the fusion rules of twisted type modules in \cite{ADL}, from which we get the following fusion product:
\begin{align}\label{fp}
V_{L_-}^{T_\chi,\pm}\boxtimes(V_{L_-}^{T_\chi,\pm})'=\displaystyle\sum_{\substack{2\mu\notin L_-\\\chi^{(\mu)}=\chi}}V_{\mu+L_-}+\sum_{\substack{2\mu\in L_-\\\chi^{(\mu)}=\chi}}V_{\mu+L_-}^{\epsilon_\mu},
\end{align}
where each $\epsilon_\mu\in\{\pm\}$.
We first show that the quantum dimension corresponding to \,$\chi$\, and \, $\chi^{(\ga_-)}$\, are the same
for all $\ga\in Q/L$.
In order to use \eqref{fp}, we must investigate the solutions to the character equation \,$\chi^{(\mu)}=\chi$. 
The condition for the summations in \eqref{fp} for \,$\chi^{(\ga_-)}$\, is \,$\left(\chi^{(\ga_-)}\right)^{(\mu)}=\chi^{(\ga_-)}$, and this is equivalent to the character equation \,$\chi^{(\ga_-+\mu)}=\chi^{(\ga_-)}$:
\begin{align*}
\left(\chi^{(\ga_-)}\right)^{(\mu)}(U_\al)&=(-1)^{(\mu|\al)}\chi^{(\ga_-)}(U_\al)\\
&=(-1)^{(\mu+\ga_-|\al)}\chi(U_\al)\\
&=\chi^{(\ga_-+\mu)}(U_\al).
\end{align*}
It follows that there is a one-to-one correspondence between solutions to the character equations \,$\chi^{(\ga_-+\mu)}=\chi^{(\ga_-)}$\, and \,$\chi^{(\mu)}=\chi$. 
Using this correspondence, we then obtain:

\begin{lemma}\label{sameqd}
The quantum dimensions for the irreducible twisted type \,$V_{L_-}^+$-modules \,$V_{L_-}^{T_\chi^{(\ga_-)},\pm}$ are constant over \;$\ga\in Q/L$.
\end{lemma}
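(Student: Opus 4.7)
The plan is to compute $d_\chi^\pm := q\dim_{V_{L_-}^+} V_{L_-}^{T_\chi,\pm}$ by squaring via the fusion product formula \eqref{fp}, and then to observe that the resulting expression depends on $\chi$ only through the solution set of the character equation $\chi^{(\mu)} = \chi$, which has already been shown to be invariant under the replacement $\chi \mapsto \chi^{(\ga_-)}$.

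First I would apply the multiplicativity of quantum dimension (Proposition \ref{DJX}(3)) together with its invariance under contragredient (Proposition \ref{DJX}(2)) to both sides of \eqref{fp}, combined with the additivity \eqref{sum}. Using the explicit quantum dimensions of untwisted irreducible $V_{L_-}^+$-modules computed in the proof of Theorem \ref{qdtype12}, namely $q\dim_{V_{L_-}^+} V_{\mu+L_-} = 2$ and $q\dim_{V_{L_-}^+} V_{\mu+L_-}^{\epsilon_\mu} = 1$, this yields
\[
(d_\chi^\pm)^2 \;=\; 2\, N_1(\chi) + N_2(\chi),
\]
where $N_1(\chi)$ and $N_2(\chi)$ count the cosets $\mu + L_- \in L_-^*/L_-$ with $\chi^{(\mu)} = \chi$ satisfying $2\mu \notin L_-$ and $2\mu \in L_-$ respectively.

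Next I would repeat this calculation with $\chi$ replaced by $\chi^{(\ga_-)}$. The key observation, already established in the paragraph preceding the lemma, is that the condition $(\chi^{(\ga_-)})^{(\mu)} = \chi^{(\ga_-)}$ is equivalent to $\chi^{(\mu)} = \chi$, so the same cosets $\mu + L_-$ contribute in both expansions. Hence $N_i(\chi^{(\ga_-)}) = N_i(\chi)$ for $i = 1, 2$, giving $(d_\chi^\pm)^2 = (d_{\chi^{(\ga_-)}}^\pm)^2$; positivity of quantum dimension (Proposition \ref{DJX}(1)) then yields $d_\chi^\pm = d_{\chi^{(\ga_-)}}^\pm$. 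The main technical point to watch is the bookkeeping in the summations of \eqref{fp}: one must verify that they range consistently over cosets in $L_-^*/L_-$ and that the quantum dimension of each summand is independent of the chosen representative $\mu$. Both are routine consequences of the $\si=-1$ discussion in Section \ref{tcasesi-1}, in particular the isomorphism $V_{\mu+L_-} \cong V_{-\mu+L_-}$ when $2\mu \notin L_-$.
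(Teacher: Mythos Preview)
Your proposal is correct and follows essentially the same route as the paper: the paper simply states the lemma as an immediate consequence of the one-to-one correspondence between solutions of $(\chi^{(\ga_-)})^{(\mu)}=\chi^{(\ga_-)}$ and $\chi^{(\mu)}=\chi$ established just before, and your argument spells out precisely how that correspondence, combined with \eqref{fp} and the known quantum dimensions of untwisted $V_{L_-}^+$-modules, forces the squared quantum dimensions to agree.
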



\begin{proposition}\label{qdQtoL}
Let \,$W$ be an irreducible $V_Q^\si$-module of twisted type. Then 
\begin{align}
q\dim_{V_Q^\si}W=q\dim_{V_{L_-}^+}V_{L_-}^{T_\chi,\pm}.
\end{align}
\end{proposition}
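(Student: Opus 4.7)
The plan is to combine Proposition~\ref{qd} with the explicit decomposition \eqref{AD3} of twisted type modules and then apply Lemma~\ref{sameqd} to collapse the sum. Write $W = T_{\la,\chi}^{\ep_0}$ as in \eqref{AD3}, namely
\[
W=\bigoplus_{\ga+L\in Q/L}V_{\ga_++\la+L_+}\otimes V_{L_-}^{T_\chi^{(\ga_-)},\ep_\ga},
\qquad \ep_\ga=\ep\eta(\ga)c_\chi(\ga_-).
\]
By Proposition~\ref{qd} it suffices to compute $q\dim_{V_L^+}W$ and divide by $|Q/L|$. Using additivity \eqref{sum} over the direct sum and the tensor-product formula \eqref{tensor} for the $V_{L_+}\otimes V_{L_-}^+$-module structure, I would write
\[
q\dim_{V_L^+}W=\sum_{\ga+L\in Q/L}q\dim_{V_{L_+}}V_{\ga_++\la+L_+}\cdot q\dim_{V_{L_-}^+}V_{L_-}^{T_\chi^{(\ga_-)},\ep_\ga}.
\]

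Next I would dispose of each factor. As in the proof of Theorem~\ref{qdtype12}, the rank-one modules $V_{\ga_++\la+L_+}$ are simple currents for $V_{L_+}$, so by Proposition~\ref{DJX}(4) their quantum dimensions are $1$. For the twisted factor, Lemma~\ref{sameqd} says the quantum dimension of $V_{L_-}^{T_\chi^{(\ga_-)},\pm}$ is independent of $\ga\in Q/L$. A small additional point, which I would verify using \eqref{fp} together with Proposition~\ref{DJX}(2)--(3), is that the two sign choices yield the same quantum dimension: applying \eqref{fp} with either sign gives the same module $V_{L_-}^{T_\chi,\pm}\boxtimes(V_{L_-}^{T_\chi,\pm})'$ on the right, so by Proposition~\ref{DJX}(3) the products $(q\dim_{V_{L_-}^+}V_{L_-}^{T_\chi,+})^2$ and $(q\dim_{V_{L_-}^+}V_{L_-}^{T_\chi,-})^2$ coincide, and positivity from Proposition~\ref{DJX}(1) forces equality. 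This justifies the ambiguous notation $\pm$ on the right-hand side and ensures that the varying $\ep_\ga$ does not affect the answer.

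Combining these, each of the $|Q/L|$ summands equals the common value $q\dim_{V_{L_-}^+}V_{L_-}^{T_\chi,\pm}$, so
\[
q\dim_{V_L^+}W=|Q/L|\cdot q\dim_{V_{L_-}^+}V_{L_-}^{T_\chi,\pm},
\]
and dividing by $|Q/L|$ via \eqref{qdW} gives the proposition.

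The main obstacle is Lemma~\ref{sameqd}, whose proof rests on the bijection between solutions of the character equations $\chi^{(\mu)}=\chi$ and $\chi^{(\ga_-+\mu)}=\chi^{(\ga_-)}$ established just before its statement; the present proposition itself is then a routine assembly of that lemma with the multiplicative properties of quantum dimensions, provided one also checks the sign-independence noted above.
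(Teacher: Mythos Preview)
Your proof is correct and follows essentially the same route as the paper: reduce to $V_L^+$ via Proposition~\ref{qd}, split the sum using \eqref{sum} and \eqref{tensor}, kill the $V_{L_+}$ factors as simple currents, and invoke Lemma~\ref{sameqd} to collapse the remaining sum. The only addition is your explicit verification that the two sign choices give the same quantum dimension; the paper absorbs this into the $\pm$ notation of Lemma~\ref{sameqd} and simply writes $\epsilon_0$ in the final line without further comment, so your argument is a welcome clarification rather than a different approach. One small inaccuracy: the right-hand side of \eqref{fp} need not be literally the same module for the two sign choices (the signs $\epsilon_\mu$ may differ), but since each summand there is a simple current or has quantum dimension $2$ regardless of its sign, the quantum dimension of the right-hand side is unchanged, and your conclusion stands.
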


\begin{proof}
Let $W=\bigoplus_{\ga+L\in Q/L}V_{\ga_++\la+L_+}\otimes V_{L_-}^{T_\chi^{(\ga_-)},\epsilon_\ga}$ be a $V_Q^\si$-module of twisted type, where each $\epsilon_\ga\in\{\pm\}$ (cf. \eqref{AD3}).
Then using \eqref{sum}, \eqref{tensor}, \eqref{qdtype1}, \eqref{qdW}, and Lemma \ref{sameqd} we have
\begin{align*}
q\dim_{V_Q^\si}W&=\left|Q/L\right|^{-1}q\dim_{V_L^+}W\\
&=\left|Q/L\right|^{-1}\sum_{\ga+L\in Q/L}q\dim_{V_{L_+}}V_{\ga_++\la+L_+}\cdot q\dim_{V_{L_-}^+}V_{L_-}^{T_\chi^{(\ga_-)},\epsilon_\ga}\\
&=\left|Q/L\right|^{-1}\sum_{\ga+L\in Q/L} q\dim_{V_{L_-}^+}V_{L_-}^{T_\chi^{(\ga_-)},\epsilon_\ga}\\
&=q\dim_{V_{L_-}^+}V_{L_-}^{T_\chi,\epsilon_0}.
\end{align*}
\end{proof}

  Since $q\dim_{V_{L_-}^+}V_{L_-}^{T_\chi,\pm}=q\dim_{V_{L_-}^+}(V_{L_-}^{T_\chi,\pm})'$, the quantum dimensions of twisted type irreducible $V_Q^\si$-modules can be calculated using \eqref{fp}, Proposition \ref{qdQtoL}, and Theorem \ref{qdtype12}.
In view of this, we set 
\begin{align}\label{M}
 M&=
 L_-^*\cap\frac{1}{2}L_-,
\end{align}
so that the number of type 2 irreducible $V_{L_-}^\si$-modules is $|M/L_-|$.

It is clear for any central character $\chi$ of $\hat{L_-}/K$ that $\chi^{(\mu)}=\chi$ if and only if $\mu\in 2L_-^*$. 
It follows from \eqref{fp} and Proposition \ref{qdQtoL} that $q\dim_{V_Q^\si}W$ depends on 
those elements in $2L_-^*$ which correspond to distinct irreducible $V_{L_-}^\si$-modules of untwisted type.
We denote this finite subset of $2L_-^*$ by $R_\si$.
Then the number of summands in \eqref{fp} is $|R_\si|$ and the number of elements in $R_\si$ corresponding to type 2 irreducible $V_{L_-}^\si$-modules which appear in \eqref{fp} is $|R_\si\cap M|$. 
We can now write $q\dim_{V_Q^\si}W$ in terms of \,$R_\si$\, and \,$M$:
\begin{theorem}\label{qdtype3}
Let \,$W$ be an irreducible $V_Q^\si$-module of twisted type and \,$R_\si$, \,$M$\, be as defined above.
Then the square of the quantum dimension of \,$W$ over \,$V_Q^\si$ is given by
\begin{align}
(q\dim_{V_Q^\si}W)^2&=2|R_\si|-|R_\si\cap M|.\label{qdtw}
\end{align}
In particular, $(q\dim_{V_Q^\si}W)^2\in\ZZ$, and $(q\dim_{V_Q^\si}W)^2=|M|$ if $R_\si=M$.
\end{theorem}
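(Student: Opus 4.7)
The plan is to compute the square of $q\dim_{V_Q^\si}W$ by writing it as the quantum dimension of a fusion product and then expanding that product via \eqref{fp}. First I would combine Proposition~\ref{qdQtoL} with parts (2) and (3) of Proposition~\ref{DJX} to obtain
\[
(q\dim_{V_Q^\si}W)^2=q\dim_{V_{L_-}^+}V_{L_-}^{T_\chi,\pm}\cdot q\dim_{V_{L_-}^+}(V_{L_-}^{T_\chi,\pm})'=q\dim_{V_{L_-}^+}\bigl(V_{L_-}^{T_\chi,\pm}\boxtimes(V_{L_-}^{T_\chi,\pm})'\bigr).
\]
This reduces the problem to evaluating the quantum dimension of the right-hand side of \eqref{fp} over $V_{L_-}^+$.

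Next I would use the equivalence $\chi^{(\mu)}=\chi\iff\mu\in 2L_-^*$ (established just before the theorem) together with the definition of $R_\si$ as a set of representatives in $2L_-^*$ indexing the distinct untwisted-type $V_{L_-}^+$-modules appearing in \eqref{fp}. Splitting the $|R_\si|$ summands according to whether $\mu\in M$ gives $|R_\si\cap M|$ type-2 modules $V_{\mu+L_-}^{\epsilon_\mu}$ and $|R_\si|-|R_\si\cap M|$ type-1 modules $V_{\mu+L_-}$. Theorem~\ref{qdtype12}, applied in the special case $(L_-,-1)$ where $L_+=0$, $L=L_-$, and the ambient orbifold is $V_{L_-}^+$, tells me each type-1 summand contributes quantum dimension $2$ and each type-2 summand contributes quantum dimension $1$. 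Summing via \eqref{sum} then yields
\[
(q\dim_{V_Q^\si}W)^2=2\bigl(|R_\si|-|R_\si\cap M|\bigr)+|R_\si\cap M|=2|R_\si|-|R_\si\cap M|,
\]
which is the desired identity. Integrality is then immediate from the right-hand side, and substituting $R_\si=M$ gives $2|M|-|M|=|M|$.

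I expect the main obstacle to be modest: the delicate point is the bookkeeping in the second step, specifically confirming that $R_\si$ is in bijection with the untwisted-type summands on the right of \eqref{fp} (no cosets $\mu+L_-$ are over- or under-counted), that the decomposition by $\mu\in M$ versus $\mu\notin M$ correctly separates type 2 from type 1, and that Theorem~\ref{qdtype12} may be cited verbatim for $(L_-,-1)$. These are essentially definitional checks using $M=L_-^*\cap\frac{1}{2}L_-$ and the general hypothesis of that theorem, which holds for any positive-definite even lattice equipped with an order-two isometry.
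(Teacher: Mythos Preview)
Your proposal is correct and follows essentially the same route as the paper: reduce via Proposition~\ref{qdQtoL} to $q\dim_{V_{L_-}^+}V_{L_-}^{T_\chi,\pm}$, square using Proposition~\ref{DJX}(2)(3), expand the fusion product with \eqref{fp}, and count the type-1 and type-2 summands via $R_\si$ and $M$ with Theorem~\ref{qdtype12}. The paper's proof is slightly terser but identical in substance, including the observation (made just before the theorem) that $q\dim_{V_{L_-}^+}V_{L_-}^{T_\chi,\pm}=q\dim_{V_{L_-}^+}(V_{L_-}^{T_\chi,\pm})'$.
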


\begin{proof}
By Proposition \ref{qdQtoL}, it is sufficient to find $q\dim_{V_{L_-}^+}V_{L_-}^{T_\chi,\pm}$. We do this using \eqref{fp} and counting the number of irreducible $V_{L_-}^+$-modules of untwisted type. 
There are precisely $|R_\si|-|R_\si\cap M|$ distinct irreducible modules of type 1 in the first sum in \eqref{fp} and $|R_\si\cap M|$ distinct irreducible modules of type 2 in the second sum in \eqref{fp}. The result now follows from \eqref{fp}, Proposition \ref{qdQtoL}, and Theorem \ref{qdtype12}.
\end{proof}

\begin{remark}
A special case of \eqref{qdtw} has been proved recently in \cite{DXY2} where the lattice $Q=L\oplus L$ is the direct sum of two copies of a positive definite even lattice $L$ of general rank and $\si$ is the transposition of the two summands (see also \cite{DXY1} for rank\,$L=1$). In this case they find \,$(q\dim_{V_Q^\si}W)^2=|L^*/L|$.
\end{remark}

\section{Fusion Products 
}\label{fp1}
Recall the notation \,$U_{\la,\mu}$, $U_{\la,\mu}^\pm$, and \,$T_{\la,\chi}^{\epsilon_0}$\, in Section \ref{irrep} for the irreducible $V_Q^\si$-modules. 
The following is the other main result about fusion products of irreducible $V_Q^\si$-modules:
\begin{theorem}
For any $\la\in L_+^*$ and $\mu\in L_-^*$, the following are fusion products for irreducible $V_Q^\si$-modules given in Theorem \ref{result1}:
\begin{align}
U_{\la,\mu}\;\boxtimes\;T_{\ga,\chi}^{\epsilon_0}&=T_{\la+\ga,\chi^{(\mu)}}^++T_{\la+\ga,\chi^{(\mu)}}^{-}\label{fr1}\\
U_{\la,\mu}^\epsilon\;\boxtimes\;T_{\ga,\chi}^{\epsilon_0}&=T_{\la+\ga,\chi^{(\mu)}}^{\epsilon_\mu},\quad \epsilon_\mu=\epsilon\epsilon_0c_\chi(\mu)\label{fr4}\\
U_{\la,\mu}\;\boxtimes\;U_{\la',\,\mu'}^{\epsilon}&=U_{\la+\la',\, \mu+\mu'}\label{fr2}\\
U_{\la,\mu}^\epsilon\;\boxtimes\;U_{\la',\mu'}^{\epsilon'}&=U_{\la+\la',\, \mu+\mu'}^{\epsilon\epsilon'}\label{fr3}\\
U_{\la,\mu}\;\boxtimes\;U_{\la',\,\mu'}&=
\begin{cases}
\sum_\epsilon (U_{\la+\la',\mu+\mu'}^\epsilon+U_{\la+\la',\mu-\mu'}^\epsilon)\\
U_{\la+\la',\mu+\mu'}+U_{\la+\la',\mu-\mu'}\\
U_{\la+\la',\mu+\mu'}+U_{\la+\la',\mu-\mu'}^++U_{\la+\la',\mu-\mu'}^-
\end{cases}\label{fr5}\\
T_{\la,\chi}^{\epsilon_0}\;\boxtimes\;T_{\la',\psi}^{\epsilon_0'}&=\sum_{\substack{\chi^{(\mu)}=\psi'\\2\mu\notin L_-}}U_{\la+\la',\mu}+\sum_{\substack{\chi^{(\mu)}=\psi'\\2\mu\in L_-}}U_{\la+\la',\mu}^{\epsilon_\mu}\label{fr6}
\end{align}
where \, $\epsilon_\mu=\epsilon_0\epsilon_0'c_\chi(\mu)(-1)^{2|\mu|^2}$ in \eqref{fr6}, the first case in \eqref{fr5} is when \,$2(\mu\pm\mu')\in L_-$\, the second case is when \,$2(\mu\pm\mu')\notin L_-$, and the third case is when 
\,$2(\mu+\mu')\notin L_-$\, and \,\,$2(\mu-\mu')\in L_-$.
\end{theorem}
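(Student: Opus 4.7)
The plan is to exploit the restriction to $V_L^\si \cong V_{L_+} \otimes V_{L_-}^+$, where the fusion rules are completely understood: on the $V_{L_+}$ factor every irreducible module is a simple current, with $V_{\la_1+L_+} \boxtimes V_{\la_2+L_+} = V_{\la_1+\la_2+L_+}$, and on the $V_{L_-}^+$ factor the fusion rules are given by Theorem \ref{ADL}. Since $V_L^\si$ is rational and fusion of a tensor product vertex algebra is the tensor product of fusion on each factor, this determines the $V_L^\si$-level picture completely.

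First, for each line of the theorem I would write $W_1 = \bigoplus_{\ga+L\in Q/L} X_\ga$ and $W_2 = \bigoplus_{\de+L\in Q/L} Y_\de$ using the explicit descriptions \eqref{AD1}, \eqref{AD2}, \eqref{AD3}, and compute $W_1 \boxtimes W_2$ as a $V_L^\si$-module summand by summand. On the $L_+$ side I get the simple-current identity $V_{\ga_++\la+L_+} \boxtimes V_{\de_++\la'+L_+} = V_{(\ga+\de)_++\la+\la'+L_+}$ (noting $(\ga+\de)_+ = \ga_++\de_+$), and on the $L_-$ side I plug into Theorem \ref{ADL} the relevant pair from $V_{\mu+L_-}$, $V_{\mu+L_-}^\pm$, $V_{L_-}^{T_\chi,\pm}$. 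I would then regroup the summands by cosets of $Q/L$ and match them term by term against the list \eqref{AD1}--\eqref{AD3} to recognize which irreducible $V_Q^\si$-modules appear on the right-hand side.

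Second, I would pin down multiplicities by the quantum-dimension constraint. By Proposition \ref{DJX}(3), $q\dim(W_1 \boxtimes W_2) = q\dim W_1 \cdot q\dim W_2$, and Theorems \ref{qdtype12}, \ref{qdtype3} give the quantum dimensions of all irreducibles. This forces the number of summands on the right: for instance \eqref{fr3} must consist of a single simple current since $1\cdot 1 = 1$, while \eqref{fr2} must be a single type-1 module since $1 \cdot 2 = 2$, and the three cases in \eqref{fr5} correspond exactly to whether $2(\mu+\mu')$ and $2(\mu-\mu')$ lie in $L_-$ (controlling whether the $L_-$-fusion produces an irreducible $V_{\mu\pm\mu'+L_-}$ or a pair $V_{\mu\pm\mu'+L_-}^+ + V_{\mu\pm\mu'+L_-}^-$). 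For \eqref{fr6}, the summation exactly reproduces the structure of \eqref{fp} after shifting by $\ga_-$ (using the bijection $\chi^{(\ga_-+\mu)} = \chi^{(\ga_-)} \iff \chi^{(\mu)} = \chi$ from the proof of Lemma \ref{sameqd}), and Theorem \ref{qdtype3} confirms this yields the correct total quantum dimension.

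The main obstacle will be tracking the signs. The simple factor $\epsilon_\mu = \epsilon\epsilon_0c_\chi(\mu)$ in \eqref{fr4} and especially the composite $\epsilon_\mu = \epsilon_0\epsilon_0'c_\chi(\mu)(-1)^{2|\mu|^2}$ in \eqref{fr6} arise from combining the sign data of Theorem \ref{ADL} --- namely $c_\chi(\la)$, $\pi_{\la,2\mu}$, and the contragredient rule \eqref{V'} --- with the cocycle twist $\eta(\ga)$ built into $v_\ga$ in \eqref{vgamma} and with the labelling convention that $\epsilon_0$ is the $\si$-eigenvalue on the $\ga = 0$ summand. Checking that these combine consistently across every coset $\ga+L$ (and in particular that the sign assigned to each summand on the right is constant as $\ga$ varies, so that one really lands in an irreducible $V_Q^\si$-module of the stated form) is the delicate bookkeeping step; the rest reduces to the routine coset and simple-current identities above.
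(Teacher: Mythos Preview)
Your proposal is correct and follows essentially the same strategy as the paper: both arguments use the known fusion rules for $V_{L_+}$ (from \cite{DL}) and for $V_{L_-}^+$ (from Theorem \ref{ADL}) to exhibit nonzero intertwining operators of the claimed types, and then invoke the quantum-dimension multiplicativity from Proposition \ref{DJX} together with Theorems \ref{qdtype12} and \ref{qdtype3} to rule out any further summands. The paper's proof of \eqref{fr6} includes the explicit verification that the solution set $R_\si^{\chi,\psi}$ of $\chi^{(\mu)}=\psi'$ is a coset of $R_\si$ (hence $|R_\si^{\chi,\psi}|=|R_\si|$), which is the precise form of the bijection you invoke, and otherwise the sign bookkeeping via $c_\chi$, $\pi_{\la,2\mu}$, and \eqref{V'} is exactly as you outline.
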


\begin{proof}
We first prove \eqref{fr1}. By the fusion rules for irreducible \,$V_{L_+}$-modules in \cite{DL} and irreducible \,$V_{L_-}^\pm$-modules  in \cite{ADL}, the fusion rules of type $\displaystyle\binom{T_{\la+\ga,\chi^{(\mu)}}^{\pm}}{U_{\la,\mu}\;T_{\ga,\chi}^{\epsilon_0}}$ are nonzero and this yields the sum on the right hand side of \eqref{fr1}. Using \eqref{qdtw} and Theorem \ref{qdtype12} to count quantum dimensions of each side of \eqref{fr1}, we see that there can be no other nonzero fusion rules used in the fusion product. 

Similarly, the fusion rules in \cite{DL, ADL} yield nonzero intertwining operators of types $\displaystyle\binom{T_{\la+\ga,\chi^{(\mu)}}^{\epsilon_\mu}}{U_{\la,\mu}^\epsilon\;T_{\ga,\chi}^{\epsilon_0}}$, $\displaystyle\binom{U_{\la+\la',\, \mu+\mu'}}{U_{\la,\mu}^\epsilon\; U_{\la',\mu'}}$, $\displaystyle\binom{U_{\la+\la',\, \mu+\mu'}^{\epsilon\epsilon'}}{U_{\la,\mu}^\epsilon\; U_{\la',\mu'}^{\epsilon'}}$
and a similar argument as for \eqref{fr1} proves the fusion products \eqref{fr4}-\eqref{fr3}.
For \eqref{fr2}, note that $2\mu\notin L_-$\, and \,$2\mu'\in L_-$ implies $U_{\la+\la,\, \mu+\mu'}\cong U_{\la+\la,\, \mu-\mu'}$. 
Also, since $q\dim(U_{\la,\mu}^\epsilon\;\boxtimes\;U_{\la',\mu'}^{\epsilon'})=1$ by \eqref{qdtype2} and Proposition \ref{DJX}, the fusion product $U_{\la,\mu}^\epsilon\;\boxtimes\;U_{\la',\mu'}^{\epsilon'}$ is a simple current, and therefore must equal one irreducible module of type 2 (cf. \eqref{qdtw}, Theorem \ref{qdtype12}). We then obtain \eqref{fr3}, where the given rule of signs follows because $2\mu\in L_-$ implies $\pi_{\la,2\mu}=1$ (cf. \eqref{pi}). 

We now prove \eqref{fr5} and \eqref{fr6}. By the fusion rules in \cite{DL, ADL}, the intertwining operators of type $\displaystyle\binom{U_{\la+\la',\, \mu+\mu'}}{U_{\la,\mu}\; U_{\la',\mu'}}$ are nonzero only when $\mu+\mu'\notin 2L_-$ and the intertwining operators of type $\displaystyle\binom{U_{\la+\la',\, \mu+\mu'}^\pm}{U_{\la,\mu}\; U_{\la',\mu'}}$ are nonzero only when $\mu+\mu'\in 2L_-$. By counting quantum dimensions of both sides of \eqref{fr5} we see that there can be no other nonzero fusion rules used in the fusion product for either case. This proves \eqref{fr5}.
Again by fusion rules in \cite{DL, ADL}, the intertwining operators of type $\displaystyle\binom{U_{\la+\la',\mu}}{T_{\la,\chi}^{\epsilon_0}\;T_{\la',\psi}^{\epsilon_0'} }$ are nonzero only when $\chi^{(\mu)}=\psi'$ and $2\mu\notin L_-$, and the intertwining operators of type $\displaystyle\binom{U_{\la+\la',\mu}^{\epsilon_\mu}}{T_{\la,\chi}^{\epsilon_0}\;T_{\la',\psi}^{\epsilon_0'} }$ are nonzero only when $\chi^{(\mu)}=\psi'$ and $ 2\mu\in L_-$ (cf. \eqref{chiprime}). 
In order to count the quantum dimension of the right-hand side of \eqref{fr6}, we count the number of solutions $\mu+L_-$ to the character equation $\chi^{(\mu)}=\psi'$. Let \,$R_\si^{\chi,\psi}$\, be the solution set, where each element is distinct under $\si$. Clearly then for any $\mu+L_-\in R_\si^{\chi,\psi}$, we have \,$\mu+L_-+R_\si\subset R_\si^{\chi,\psi}$ (cf.  \eqref{R}). We show these two sets are equal, i.e., that $\mu, \mu'\in R_\si^{\chi,\psi}$ implies $\mu-\mu'+L_-\in R_\si$:
\begin{align*}
\chi^{(\mu-\mu')}(U_\al)&=(-1)^{(\mu-\mu'|\al)}\chi(U_\al)\\
&=(-1)^{(\mu|\al)+(\mu'|\al)+(\al|\al)}\chi(U_\al)\\
&=(-1)^{(\mu|\al)+\frac{1}{2}(\al|\al)}(-1)^{(\mu'|\al)+\frac{1}{2}(\al|\al)}\chi(U_\al)\\
&=(-1)^{(\mu|\al)+\frac{1}{2}(\al|\al)}\psi(U_\al)\\
&=\chi(U_\al),
\end{align*}
where the last two steps use $\chi^{(\mu)}=\psi'$. Therefore $|R_\si^{\chi,\psi}|=|R_\si|$.
By counting quantum dimensions of both sides of \eqref{fr6} we now see there can be no other nonzero fusion rules used in the fusion product. Finally, the rule for the sign $\epsilon_\mu$ in \eqref{fr6} follows from the fusion rules in Theorem \ref{ADL} and \eqref{V'}. This proves \eqref{fr6}.
\end{proof}

\section{Examples}\label{ex}
Here we show explicit details of quantum dimensions and fusion products of twisted type orbifold modules in the cases of a 2-permuatation orbifold
and for the $A_n$ root lattice with a Dynkin diagram automorphism.

\subsection{$(V_K\otimes V_K)^{\ZZ_2}$ orbifolds
}\label{sirr}
The case when $Q$ is an orthogonal direct sum of two copies of an arbitrary rank positive definite even lattice $K$ with the 2-cycle permutation has recently been studied in \cite{DXY2}. Here we present how some of our results can simplify in these cases and show the explicit results in the special case of the $A_2$ root lattice. In this section, we use the symbol $K$  to avoid confusion with \eqref{Lpm}.


Let \,$Q=K\oplus K$\; be an orthogonal direct sum, where $K=\langle\al_1,\ldots,\al_n\rangle$ is a positive definite even lattice with rank\;$n$. Set $(\al_i|\al_i)=2k_i$
 and for $\ga\in K$, set 
 $\ga^1=(\ga,0)$\, and\, $\ga^2=(0,\ga).$
Then a 2-cocycle $\ep$ on $K$ can be extended to a 2-cocycle on $Q$ for which we may take $\eta=1$ on $Q$ (cf. \eqref{twlat4}).
Let $\si$ be the 2-cycle permutation $\si(a,b)=(b,a)$. 
Then the vectors 
\begin{align}
\al^i&=\al_i^1+\al_i^2=(\al_i,\al_i),\\ 
\be^i&=\al_i^1-\al_i^2=(\al_i,-\al_i), \label{beta}
\end{align}
for $i=1,\ldots,n$, are eigenvectors of $\si$ with eigenvalues $\pm1$
and
\begin{align}
(\al^i|\al^i)=&\,4k_i=(\be^i|\be^i),\label{4k}\\
(\al^i|\al^j)=2&(\al_i|\al_j)=(\be^i|\be^j).
\end{align}
These relations then imply $Q=\bar{Q}$ (cf. Section 3).

First we describe the orbifold $V_Q^\si$. We have (cf. \eqref{Lpm}, \eqref{orb})
\begin{align}
L_+&=\langle\al^1,\ldots,\al^n\rangle,\quad
L_-=\langle\be^1,\ldots,\be^n\rangle, \\
V_Q^\si
&=\bigoplus_{(b_1,\ldots,b_n)}\left(V_{\frac{1}{2}\sum b_i\al^i+L_+}\otimes V_{\frac{1}{2}\sum b_i\be^i+L_-}^+\right),
\end{align}
where each $b_i\in\{0,1\}$ and there are $|Q/L|=
2^n$ summands. 
We have 
\begin{align}
M
=&\left\langle\frac{\be^i}{2}\,\bigg|\,i=0,\ldots,n\right\rangle,\\
R_\si\cap M=&\;\text{span}_{\ZZ_2}\left\{\displaystyle \frac{\be^i}{2}\,\bigg|\,(\al_i|K)\in2\ZZ\right\},
\end{align}
and in this case, the Gram matrix for $K$ can determine the size of $R_\si\cap M$.
\begin{lemma}\label{RM1}
Let $M$ be as in \eqref{M}. Then $|R_\si\cap M|=1$ if and only if for every $i$, $(\al_i|\al_j)$ is odd for at least one $j$.
\end{lemma}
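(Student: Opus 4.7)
The strategy is to deduce the lemma directly from the explicit description
\[ R_\si\cap M \;=\; \mathrm{span}_{\ZZ_2}\bigl\{\tfrac{\be^i}{2} : (\al_i|K)\subset 2\ZZ\bigr\} \]
recorded just above the statement. My first step will be the preparatory observation that the cosets $\be^1/2+L_-,\ldots,\be^n/2+L_-$ form a $\ZZ_2$-basis of $M/L_-=\tfrac{1}{2}L_-/L_-$: indeed $\be^1,\dots,\be^n$ is a $\ZZ$-basis of $L_-$, and a relation $\sum c_i\be^i/2\in L_-$ with $c_i\in\{0,1\}$ forces each $c_i$ to be even, hence zero in $\ZZ_2$. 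In particular each $\be^i/2$ represents a nonzero class.

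With this basis in hand, $R_\si\cap M$ sits inside $\ZZ_2^n$ as a subspace, and has exactly one element if and only if it is the zero subspace. Because the $\be^i/2$ are $\ZZ_2$-linearly independent, the displayed span is zero if and only if no generator appears, that is, if and only if there is no index $i$ with $(\al_i|K)\subset 2\ZZ$.

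Finally, since $K=\ZZ\al_1+\cdots+\ZZ\al_n$ and the bilinear form is $\ZZ$-linear in each argument, the condition $(\al_i|K)\subset 2\ZZ$ is equivalent to $(\al_i|\al_j)\in 2\ZZ$ for every $j=1,\ldots,n$. Its negation holding simultaneously for every $i$ is precisely the condition stated in the lemma, so chaining these equivalences completes the argument.

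The only potentially nontrivial input is the quoted description of $R_\si\cap M$ itself; this will follow by writing a general element of $M/L_-$ as $\sum_i(c_i/2)\be^i$ with $c_i\in\ZZ_2$, computing $(\mu|\be^j)=\sum_i c_i(\al_i|\al_j)$, and recognizing the defining condition $\mu\in 2L_-^*$ as the mod-$2$ vanishing $\sum_i c_i(\al_i|\al_j)\equiv 0$ for all $j$. Once that characterization is in place, the lemma reduces to the one-line counting argument above; the remaining work is purely bookkeeping.
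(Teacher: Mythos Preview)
Your reduction of the lemma to the paper's displayed formula
\[
R_\si\cap M=\Span_{\ZZ_2}\bigl\{\tfrac{\be^i}{2}:(\al_i|K)\subset 2\ZZ\bigr\}
\]
is sound, and it mirrors the paper's own argument (which simply checks that each generator $\be^i/2$ lies outside $R_\si$ whenever some $(\al_i|\al_j)$ is odd). The gap is in your final paragraph, where you try to justify that displayed formula. What your computation actually yields is
\[
R_\si\cap M \;=\; \Bigl\{\,\textstyle\sum_i \tfrac{c_i}{2}\be^i \;:\; c_i\in\ZZ_2,\ \sum_i c_i(\al_i|\al_j)\equiv 0\pmod 2\ \text{for all }j\,\Bigr\},
\]
i.e.\ the kernel of the mod-$2$ Gram matrix $\bar G$ of $K$ acting on $\ZZ_2^{\,n}$. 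This kernel is \emph{not} in general spanned by the standard basis vectors $e_i$ with $\bar Ge_i=0$; those are only the obvious kernel elements. Hence the step ``the lemma reduces to the one-line counting argument above'' does not go through: excluding every single $\be^i/2$ from $R_\si$ does not force $R_\si\cap M$ to be trivial.

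The obstruction is genuine, not cosmetic. Take $K$ of rank $3$ with Gram matrix having $2$'s on the diagonal and $1$'s elsewhere (positive-definite, even). For every $i$ there is a $j$ with $(\al_i|\al_j)$ odd, so the lemma's hypothesis holds and indeed each $\be^i/2\notin R_\si$. Yet $\mu=\tfrac12(\be^1+\be^2+\be^3)$ satisfies $(\mu|\be^j)=(\al_1|\al_j)+(\al_2|\al_j)+(\al_3|\al_j)=4\in 2\ZZ$ for every $j$, so $\mu$ is a nonzero element of $R_\si\cap M$. Thus both the displayed span description and the ``if'' direction of the lemma fail in this example; the paper's proof shares the same gap, since it too only tests the generators $\be^i/2$. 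The honest statement emerging from your kernel description is that $|R_\si\cap M|=1$ if and only if the Gram matrix of $K$ is invertible over $\ZZ_2$, which is strictly stronger than ``every row has an odd entry.''
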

\begin{proof}
Let $\frac{\be^i}{2}\in M$ and $j$ be such that $(\al_i|\al_j)$ is odd. Then
\begin{align}
\left(\frac{\be^i}{2}\bigg|\be^j\right)&=(\al_i|\al_j),\qquad
\chi^{(\frac{\be^i}{2})}(e^{\be^j})=-\chi(e^{\be^j}),
\end{align}
so that $\frac{\be^i}{2}\notin R_\si$.
\end{proof}

Due to \eqref{4k}, we have 
\begin{align}
\chi'=\chi
\end{align}
 for all central characters $\chi$ of $L_-$. Hence all irreducible $V_Q^\si$-modules of twisted type are self dual.

\subsubsection{Rank 1 case}\label{A2D}
Here we provide explicit details of Theorem \ref{qdtype3} in the case when $K$ is a rank 1 even positive definite lattice. These results will correspond to the results previously done in \cite{DXY1}. 

Write $K=\ZZ\ga$, where $(\ga|\ga)=2k$ for some positive integer $k$ and let $\si$ be the 2-cycle permutation. Then it is easy to see that 
\begin{align}
L_-=\ZZ\be,&\qquad(\be|\be)=4k,\\
L_-^*=\displaystyle\frac{1}{4k}L_-,&\qquad 2L_-^*=\displaystyle\frac{1}{2k}L_-.
\end{align}
We also have $|R_\si\cap M|=2$ and the other elements of $L_-^*/L_-$ are paired under $\si$. Therefore the quantum dimension of any twisted type irreducible $V_Q^\si$-module $W$ is given by Theorem \ref{qdtype3}:
\begin{align}
(q\dim_{V_Q^\si}W)^2&=2|R_\si|-|R_\si\cap M|\\
&=2\left(\frac{2k-2}{2}+2\right)-2\\&=2k.
\end{align}

\subsubsection{$A_2\oplus A_2$ case}\label{A2D}
Here we provide explicit details of Theorems \ref{result1} 
and \ref{qdtype3} in the case when $K=\ZZ\al_1+\ZZ\al_2$ is the $A_2$ root lattice. 

Consider the orthogonal direct sum \,$Q=K\oplus K$ 
and let $\si$ be the 2-cycle permutation.
Then 
\begin{align}
L_+&=\langle\al^1,\al^2\rangle,\quad
L_-=\langle\be^1,\be^2\rangle,\\
|\al^i|^2=4&=|\be^i|^2,\quad
(\al^1|\al^2)=-2=(\be^1|\be^2),
\end{align}
and the orbifold is given by
\begin{align}
V_Q^\si&=\bigoplus_{(b_1, b_2)}\left(V_{\frac{1}{2}\sum b_i\al^i+L_+}\otimes V_{\frac{1}{2}\sum b_i\be^i+L_-}^+\right),
\end{align}
where $b_1,b_2\in\{0,1\}$. 
To describe the dual lattices \,$L_\pm^*$\, we set
\begin{align}
\de^1&=\frac{\al^1+2\al^2}{6},\quad \de^2=\frac{2\al^1+\al^2}{6},\\
\rho^1&=\frac{\be^1+2\be^2}{6},\quad \rho^2=\frac{2\be^1+\be^2}{6},
\end{align}
so that $L_+^*=\langle\de^1,\de^2\rangle$\, and \,$L_-^*=\langle\rho^1,\rho^2\rangle$
with quotients
$L_\pm^*/L_\pm\simeq\ZZ_2\times\ZZ_6.$
Note that 
\,$\si(\rho^i+L_-)=5\rho^i+L_-$,\;
$2\rho^2+L_-=4\rho^1+L_-$,
\,and\, $2\de^2+L_+=4\de^1+L_+$.
By Theorem \ref{result1}, the type 1 irreducible $V_Q^\si$-modules are
\begin{align}\label{type1A2}
U_{\la_{ij}, \,\mu}=\bigoplus_{(b_1, b_2)}\left(V_{\la_{ij}+\frac{1}{2}\sum b_i\al^i+L_+}\otimes V_{\mu+\frac{1}{2}\sum b_i\be^i+L_-}\right),
\end{align}
where $\la_{ij}=i\de^1+j\de^2$, $0\leq i\leq5$, $j=0, 1$ and $\mu\in\{\rho^1, 2\rho^1, \rho^2, \rho^1+5\rho^2\}$.
Set \,$\mu=k\rho^1+l\rho^2$.
Then for  \eqref{type1A2} to be untwisted, we also require that \,$i+k\in2\ZZ$ and $j+l\in2\ZZ$ (cf. Remark \ref{condition}).
This gives 12 distinct irreducible modules of type 1.

Now we find \,
\begin{equation}
M/L_-=\left\langle3\rho^1+L_-,3\rho^2+L_-\right\rangle\simeq\ZZ_2^2,
\end{equation}
(cf. \eqref{M}) so that the type 2 irreducible $V_Q^\si$-modules are given by $U_{\la_{ij}, \,\mu}^\pm$, with $\mu+L_-\in M/L_-$. In order for $U_{\la_{ij}, \,\mu}^\pm$ to be untwisted we also require that \,$i+k\in2\ZZ$ and $j+l\in2\ZZ$ (cf. Remark \ref{condition}).
This gives 24 distinct irreducible modules of type 2.

For the orbifold modules of twisted type, consider the following labelling of the central characters:
\begin{center}
\begin{tabular}{c|rrrc}
&1&$e^{\be^1}$&$e^{\be^2}$&$e^{\be^1+\be^2}$\\
\hline
$\chi_{00}$&1&1&1&1\\
$\chi_{10}$&1&1&$-1$&$-1$\\
$\chi_{01}$&1&$-1$&1&$-1$\\
$\chi_{11}$&1&$-1$&$-1$&$1$
\end{tabular}
\end{center}
For $\ga=c_1\beta^1+c_2\beta^2$, notice that $\chi_{ij}(e^\ga)=(-1)^{c_1j+c_2i}$. We can then use this labelling to show the following properties:
\begin{lemma}
Let $b_i, c_i\in\{0,1\}$ and $\mu=\frac{1}{2}b_1\beta^1+\frac{1}{2}b_2\beta^2$. 
Then we have
\begin{align}
c_{\chi}\left(\mu\right)&=\chi(e^{b_1\be^1})\chi(e^{b_2\be^2}),\label{cchiA_2A_2}\\
\chi_{ij}^{(\mu)}&=\chi_{i+b_1,j+b_2}\label{chiij}\\
\chi^{(r\rho^1+k\rho^2)}(e^{\vec{c}\cdot\vec{\be}})&=(-1)^{rc_2+kc_1}\chi(e^{\vec{c}\cdot\vec{\be}}),
\end{align}
where $\vec{c}\cdot\vec{\be}=c_1\beta^1+c_2\beta^2$ and the indices in \eqref{chiij} are taken modulo $2$.
\end{lemma}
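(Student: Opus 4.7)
All three identities reduce to direct computations from the definitions \eqref{c} and \eqref{newchi} together with the Gram data of $L_-$, namely $|\be^i|^2=4$ and $(\be^1|\be^2)=-2$. My plan is to dispatch the second and third identities first as they amount to pure bookkeeping with the Gram matrix, and then tackle the first, where the bimultiplicative extension of the cocycle $\ep$ enters in a slightly subtler way.

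For the third identity, the plan is to compute the pairings $(\rho^i|\be^j)$ directly from $\rho^1=\tfrac{1}{6}(\be^1+2\be^2)$ and $\rho^2=\tfrac{1}{6}(2\be^1+\be^2)$, obtaining
\[
(\rho^1|\be^1)=0,\quad (\rho^1|\be^2)=1,\quad (\rho^2|\be^1)=1,\quad (\rho^2|\be^2)=0.
\]
Thus $(r\rho^1+k\rho^2\,|\,c_1\be^1+c_2\be^2)=rc_2+kc_1$, and the identity follows from \eqref{newchi}. For the second identity, I apply \eqref{newchi} with $\ga=c_1\be^1+c_2\be^2$; the Gram matrix gives $(\mu|\ga)\equiv b_1c_2+b_2c_1\pmod{2}$, and combining with the formula $\chi_{ij}(e^\ga)=(-1)^{c_1j+c_2i}$ stated just above yields $(-1)^{c_1(j+b_2)+c_2(i+b_1)}=\chi_{i+b_1,j+b_2}(e^\ga)$, as required.

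For the first identity, the plan is to unwind the definition \eqref{c} at $\mu=\tfrac{1}{2}(b_1\be^1+b_2\be^2)$. A short calculation gives $|\mu|^2=b_1^2-b_1b_2+b_2^2$, so $(\mu|2\mu)=2|\mu|^2$ is always even and the leading sign $(-1)^{(\mu|2\mu)}$ trivializes. The remaining equation
\[
\ep(\mu,2\mu)\,\chi(e^{b_1\be^1+b_2\be^2})=\chi(e^{b_1\be^1})\chi(e^{b_2\be^2})
\]
can be unpacked by expanding $\chi(e^{b_1\be^1+b_2\be^2})$ via $e^{b_1\be^1}e^{b_2\be^2}=\ep(b_1\be^1,b_2\be^2)\,e^{b_1\be^1+b_2\be^2}$ together with the character property of $\chi$ on the abelian quotient $\hat{L}_-/K$; it then reduces to the sign identity $\ep(\mu,2\mu)=\ep(b_1\be^1,b_2\be^2)$. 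The cases where one of $b_1,b_2$ vanishes are essentially immediate, so the only nontrivial check is at $(b_1,b_2)=(1,1)$.

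The main obstacle is making consistent sense of $\ep(\mu,2\mu)$ for $\mu\in L_-^*\setminus L_-$. One either extends $\ep$ bimultiplicatively to $L_-^*\times L_-$ in a way compatible with the relations defining $\hat{L}_-/K$, or sidesteps this via an alternative characterization of $c_\chi(\mu)$ as in \cite{ADL}. Once a convention is fixed, the remaining sign bookkeeping is controlled by the normalization $\ep(\be^i,\be^i)=(-1)^{|\be^i|^2/2}=1$ from \eqref{lat2} together with $\ep(\be^1,\be^2)\ep(\be^2,\be^1)=(-1)^{(\be^1|\be^2)}=1$ from \eqref{lat22}, which pin down the relevant signs uniquely.
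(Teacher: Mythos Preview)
The paper states this lemma without proof, so there is no argument to compare against; I can only assess your proposal on its own merits.

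Your treatment of the second and third identities is correct and complete. The pairings $(\rho^i|\be^j)$ come out exactly as you say, and for the second identity your computation $(\mu|\ga)\equiv b_1c_2+b_2c_1\pmod 2$ together with the character formula $\chi_{ij}(e^\ga)=(-1)^{c_1j+c_2i}$ gives the result immediately.

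For the first identity there is a genuine gap. You correctly reduce to the sign identity $\ep(\mu,2\mu)=\ep(b_1\be^1,b_2\be^2)$ and correctly flag that $\ep(\mu,2\mu)$ only makes sense after extending $\ep$ beyond $L_-\times L_-$. But your closing claim that \eqref{lat2} and \eqref{lat22} ``pin down the relevant signs uniquely'' is not right: those relations give only $\ep(\be^i,\be^i)=1$ and $\ep(\be^1,\be^2)\ep(\be^2,\be^1)=1$, leaving $\ep(\be^1,\be^2)$ itself undetermined, and any bimultiplicative extension to $\tfrac12 L_-\times L_-$ introduces further square-root ambiguities (for instance $\ep(\tfrac12\be^1,\be^1)^2=\ep(\be^1,\be^1)=1$ allows either sign). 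Even your ``essentially immediate'' diagonal cases $b_1b_2=0$ already require knowing $\ep(\tfrac12\be^i,\be^i)$. To actually finish you must either invoke the specific cocycle extension of \cite{ADL} on the enlarged lattice and compute with it, or argue via an alternative operator-theoretic characterization of $c_\chi$ from \cite{ADL}; the normalizations you cite are not enough. Note, incidentally, that the character table itself forces $\ep(\be^1,\be^2)=1$ under the paper's conventions, since $\chi_{00}$ is multiplicative with value $1$ on each of $e^{\be^1}$, $e^{\be^2}$, $e^{\be^1+\be^2}$; this observation would shorten whatever completion you choose.
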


Now since $(\pi_+ Q)^*/(\pi_+ Q)=\langle2\de^1+\pi_+ Q\rangle\simeq\ZZ_3$,
there are $3\cdot4=12$ orbifold modules of twisted type:
\begin{align}
T_{2i\de^1,\chi_{mn}}^{\epsilon_0}=&\bigoplus_{\mu=(b_1,b_2)}V_{2i\de^1+\frac{1}{2}\sum b_i\al^i+L_+}\otimes V_{L_-}^{T_{\chi_{m+b_1,n+b_2}},\epsilon_\mu},
\end{align}
where $i=0,1, 2$, $\epsilon_\mu=\epsilon c_\chi(\mu)$
(cf. \eqref{cchiA_2A_2}).

Now we determine the quantum dimensions for irreducible modules of twisted type:
\begin{align}
R_\si=\{0, 2\rho^1\},&\quad | R_\si\cap M|=1,\\
q\dim_{V_Q^\si}T_{\la,\chi_{ij}}^{\epsilon_0}&=\sqrt{3},\label{qdimtype3}
\end{align}
(cf. Lemma \ref{RM1}) and a few of their fusion products:
\begin{align}
T_{\la,\chi_{00}}^{\epsilon_0}\;\boxtimes\;T_{\la',\chi_{10}}^{\epsilon_0'}&=U_{\la+\la',\rho^1}
+U_{\la+\la',3\rho^1}^{\epsilon_0\epsilon_0'},\\
T_{\la,\chi_{00}}^{\epsilon_0}\;\boxtimes\;T_{\la',\chi_{01}}^{\epsilon_0'}&=U_{\la+\la',\rho^2}
+U_{\la+\la',3\rho^2}^{\epsilon_0\epsilon_0'},\\
T_{\la,\chi_{00}}^{\epsilon_0}\;\boxtimes\;T_{\la',\chi_{11}}^{\epsilon_0'}&=U_{\la+\la',\rho^1+5\rho^2}
+U_{\la+\la',\rho^1+\rho^2}^{\epsilon_0\epsilon_0'}.
\end{align}
The other fusion products  are similar.
Note that each fusion product can be written in the form \,$U_{\la,\mu} + U_{\la,\mu'}^\pm$\, ($\la\in L_+, \;\mu,\mu'\in L_-$),
and that
$q\dim(T_{\la,\chi_{ij}}^{\epsilon_0}\;\boxtimes\;T_{\la',\chi_{kl}}^{\epsilon_0'})=3$
for\, $i,j,k,l\in\{0,1\}$ (cf. \eqref{qdimtype3}).


\subsection{The $A_n$ root lattice.}\label{An}
Consider the $A_n$ Dynkin diagram with the simple roots $\{\alpha_1,\ldots, \alpha_n\}$, 
%
%
the associated root lattice $Q=\sum_{i=1}^n\ZZ\al_i$, 
and the Dynkin diagram automorphism $\si:\alpha_i \leftrightarrow \alpha_{n-i+1}$.
Set 
\begin{equation}
\alpha^i=\alpha_i+\alpha_{n-i+1}\quad \text{and}\quad \beta^i=\alpha_i-\alpha_{n-i+1}.
\end{equation}
The construction of all irreducible $V_Q^\si$-modules was done in \cite{E}. Since the results depend on the parity of $n$, we treat the cases separately.
\subsubsection{Results for $n$ odd}
Throughout this section, set $l=\frac{n-1}{2}$ and let $i<l+1$. Then
\begin{equation*}
|\al^i|^2=4=|\be^i|^2,\quad(\alpha^i |\beta^j)=0,\quad(\alpha^i |\alpha^{i+1})=-2=(\beta^i |\beta^{i+1}),
\end{equation*}
and $(\alpha^i |\alpha^j)=0=(\beta^i |\beta^j)$ otherwise, and
\begin{eqnarray*}
L_+&=&\sum_{i=1}^l\mathbb{Z}\alpha^i+\mathbb{Z}\alpha_{l+1},\quad
L_-=\sum_{i=1}^l\mathbb{Z}\beta^i.
\end{eqnarray*}
Therefore $Q=\bar{Q}$ (cf. Section 3), and the orbifold is given by
\begin{equation}\label{orbAnodd}
V_Q^\si\simeq\displaystyle\bigoplus_{\left(b_1,\ldots,b_l\right)}\left(V_{\frac{1}{2}\sum b_i\alpha^i+L_+}\otimes V_{\frac{1}{2}\sum b_i\beta^i+L_-}^+\right),
\end{equation}
where $b_1,\ldots,b_l\in\{0,1\}$.
Now the Gram matrix for $L_-$ is twice the Gram matrix for $A_l$ so that we may use the dual lattice $\{\la_1,\ldots,\la_l\}$ for $A_l$ to describe the dual lattice for $L_-$. A basis for $L_-^*$ is then
\begin{align*}
\mu_i =\frac{1}{2(l+1)}((l-i+&1)\be^1+2(l-i+1)\be^2\\&+i(l-i+1)\be^i+i(l-i)\be^{i+1}+\cdots+i\be^l),
\end{align*}
where $i=1,\ldots,l$.
Then there are $l+4$ irreducible $V_Q^\si$-modules of untwisted type (see \cite{E} Theorem 5.5.5):
\begin{align*}
U_{0}^\pm=&\displaystyle\bigoplus_{(b_1,\ldots,b_l)}\left(V_{\frac{1}{2}\sum b_i\alpha^i+L_+}\otimes V_{\frac{1}{2}\sum b_i\be^i+L_-}^\pm\right),\\
\hspace{0.1in}
U_{
k}=&\displaystyle\bigoplus_{(b_1,\ldots,b_l)}\left(V_{\frac{1}{2}\epsilon_k\alpha_{l+1}+\frac{1}{2}\sum b_i\alpha^i+L_+}\otimes V_{k\mu_l+\frac{1}{2}\sum b_i\be^i+L_-}\right),\quad k=1,\ldots,l\\
U_{
l+1}^\pm=&\displaystyle\bigoplus_{(b_1,\ldots,b_l)}\left(V_{\frac{1}{2}\epsilon_l\alpha_{l+1}+\frac{1}{2}\sum b_i\alpha^i+L_+}\otimes V_{(l+1)\mu_l+\frac{1}{2}\sum b_i\be^i+L_-}^\pm\right),
\end{align*}
where $b_1,\ldots,b_l\in\{0,1\}$ and $\epsilon_i=0$ for $i$ odd and $\epsilon_i=1$ for $i$ even.
From this it is clear that 
$|R_\si|=\left\lfloor\frac{l}{2}\right\rfloor+2$ \,and\, $|R_\si\cap M|=2.$
If $W$ is an irreducible $V_Q^\si$-module of twisted type, then from Theorem \ref{qdtype3} we have 
\begin{eqnarray*}
(q\dim_{V_Q^\si}W)^2=
\begin{cases}
l+2,&l\text{ even},\\
l+1,&l\text{ odd}
\end{cases}.
\end{eqnarray*}

\subsubsection{Results for $n$ even}
Throughout this section, set $l=\frac{n}{2}$ and let $i<l+1$. Then for $i<l$,
\begin{align*}
|\al^i|^2&=4=|\be^i|^2,\quad(\alpha^i |\beta^j)=0,\quad(\alpha^i |\alpha^{i+1})=-2=(\beta^i |\beta^{i+1}),
\end{align*}
$(\beta^l |\beta^l)=6$, $(\al^l |\al^l)=2$, and $(\alpha^i |\alpha^j)=0=(\beta^i |\beta^j)$ otherwise. Then 
\begin{eqnarray*}
\bar{Q}
=\sum_{i=1}^{l-1}\mathbb{Z}\alpha_i+\ZZ\al^l+\ZZ\be^l+\sum_{i=l+2}^n\mathbb{Z}\alpha_i,
\end{eqnarray*}
and the cosets $\bar{Q}/L$ are in correspondence with $\{0,1\}$-valued $(l-1)$-tuples (cf. Section 3).
Therefore the orbifold is given by
\begin{equation}\label{orbAnodd}
V_Q^\si\simeq\displaystyle\bigoplus_{\left(b_1,\ldots,b_{l-1}\right)}\left(V_{\frac{1}{2}\sum b_i\alpha^i+L_+}\otimes V_{\frac{1}{2}\sum b_i\beta^i+L_-}^+\right),
\end{equation}
where $b_1,\ldots,b_{l-1}\in\{0,1\}$.
A basis of $L_-^*$ was computed in  \cite[Proposition 5.6.2]{E}:
\begin{align*}
\mu_i = \frac{1}{2(2l+1)}((2l-2i+1)\beta^1&+\cdots+i(2l-2i+1)\beta^i\\&+i(2l-2i-1)\beta^{i+1}+\cdots+i\beta^l),
\end{align*}
where $1\leq i\leq l$.
Set  $\gamma = \frac{1}{2}\left(\beta^1+\beta^3+\cdots+\beta^s\right)$, where $s=  \begin{cases} 
   l, &  l \hspace{0.05in}\text{odd} \\
   l-1,& l \hspace{0.05in}\text{even}\\
       \end{cases}$.
       
Then there are $l+4$ irreducible $V_Q^\si$-modules of untwisted type (see \cite{E} Theorem 5.6.6):
\begin{align*}
U_0^\pm=&\displaystyle\bigoplus_{(b_1,\ldots,b_{l-1})}\left(V_{\frac{1}{2}\sum b_i\alpha^i+L_+}\otimes V_{\frac{1}{2}\sum b_i\be^i+{L_-}}^\pm\right)\\
U_k=&\displaystyle\bigoplus_{(b_1,\ldots,b_{l-1})}\left(V_{\frac{1}{2}\sum b_i\alpha^i+L_+}\otimes V_{2k\mu_l+\frac{1}{2}\sum b_i\be^i+{L_-}}\right),\hspace{0.1in} k=1,\ldots,l-1,\\
U_\ga^\pm=&\displaystyle\bigoplus_{(b_1,\ldots,b_{l-1})}\left(V_{\frac{1}{2}\sum b_i\alpha^i+L_+}\otimes V_{\ga+\frac{1}{2}\sum b_i\be^i+{L_-}}^\pm\right),
\end{align*}
where $b_1,\ldots,b_l\in\{0,1\}$.
Note that \,$\frac{\be^l}{2}\notin R_\si$. From this it is clear that 
\begin{eqnarray*}
|R_\si|=
\begin{cases}
l+1,&l\text{ even},\\
l,&l\text{ odd}
\end{cases}
,\quad 
|R_\si\cap M|=
\begin{cases}
2,&l\text{ even},\\
1,&l\text{ odd}
\end{cases}.
\end{eqnarray*}
If $W$ is an irreducible $V_Q^\si$-module of twisted type, then from Theorem \ref{qdtype3} we have 
\begin{eqnarray*}
(q\dim_{V_Q^\si}W)^2=
\begin{cases}
2l,&l\text{ even},\\
2l-1,&l\text{ odd}
\end{cases}.
\end{eqnarray*}

\section*{Acknowledgements}
The author is grateful to Bojko Bakalov for helpful discussions and revisions of drafts. The author is also grateful to the reviewer for thoughtful comments and suggestions.

\bibliographystyle{amsalpha}

\end{document}